\newcommand{\vertices}{\ensuremath{V}}
\newcommand{\p}{\mbox{\rm P}}
\newcommand{\np}{\mbox{\rm NP}}
\newtheorem{theorem}{Theorem}
\newtheorem{proposition}[theorem]{Proposition}
\newcommand{\OMIT}[1]{} 
\newcommand{\Omit}[1]{} 
\newenvironment{proof}{\noindent{\bf
Proof.}\hspace*{1em}}{\literalqed\bigskip} \def\literalqed{{\
\nolinebreak\hfill\mbox{\qedblob\quad}}}
\newcommand\qedblob{\mbox{\ding{113}}}
\newcommand{\shapley}{SV} \newcommand{\short}{\textsc{Short}}
\newcommand{\reroute}{\textsc{Reroute}}
\newcommand{\fracshapley}{\phi^{\textsc{SV}}}
\newcommand{\fracdist}{\phi^{\textsc{Depot}}}
\newcommand{\fracreroute}{\phi^{\textsc{Reroute}}}
\newcommand{\fracshort}{\phi^{\textsc{Short}}}
\newcommand{\fracmoat}{\phi^{\textsc{Moat}}}
\newcommand{\fracchris}{\phi^{\textsc{Chris}}}
\newcommand{\fracblend}{\phi^{\textsc{Blend}}}
\newcommand{\fracproxy}{\phi^{\textsc{Proxy}}}
\begin{document}

\author{\name Haris Aziz \email haris.aziz@nicta.com.au \\ \addr NICTA
and UNSW,\\ Sydney, Australia \AND \name Casey Cahan \email 
ccah002@aucklanduni.ac.nz \\ \addr University of Auckland,\\ Auckland,
New Zealand \AND \name Charles Gretton \email
charles.gretton@nicta.com.au \\ \addr NICTA and ANU, \\ Canberra,
Australia\\ Griffith University, \\ Gold Coast, Australia \AND \name
Phillip Kilby \email phillip.kilby@nicta.com.au \\ \addr NICTA and ANU,
\\ Canberra, Australia \AND \name Nicholas Mattei \email
nicholas.mattei@nicta.com.au \\ \addr NICTA and UNSW,\\ Sydney,
Australia \AND \name Toby Walsh \email toby.walsh@nicta.com.au \\ \addr
NICTA and UNSW,\\ Sydney, Australia}

\title{A Study of Proxies for Shapley Allocations of Transport Costs}

\maketitle

\begin{abstract}

We propose and evaluate a number of solutions to the problem of
calculating the cost to serve each location in a single-vehicle
transport setting. 
Such cost to serve analysis has application both strategically and
operationally in transportation.
The problem is formally given by the \emph{traveling salesperson game}
(TSG), a cooperative total utility game in  which agents correspond to
locations in a {\rm travelling salesperson problem} (TSP). 
The cost to serve a location is an allocated portion of the cost of an
optimal tour. 
The {\em Shapley value} is one of the most important normative division
schemes in cooperative games, giving a principled and fair allocation
both for the TSG and more generally.
We consider a number of direct and sampling-based procedures for
calculating the Shapley value, and  present the first proof that
approximating the Shapley value of the TSG within a constant factor is
NP-hard.
Treating the Shapley value as an ideal baseline allocation, we then
develop six proxies for that value which are relatively easy to compute.
We perform an experimental evaluation using Synthetic Euclidean games as
well as games derived from real-world tours calculated for fast-moving
consumer goods scenarios.
Our experiments show that several computationally tractable allocation
techniques correspond to good proxies for the Shapley value.

\end{abstract}

\section{Introduction}

We study transport scenarios where deliveries of consumer goods are made
from a depot to locations on a road network. At each location there is a
customer, e.g.~a vending machine or shop, that has requested some goods,
e.g.~milk, bread, or soda. The vendor who plans and implements
deliveries is faced with two vexing problems. First, the familiar
combinatorial problem of routing and scheduling vehicles to deliver
goods cost effectively.
Many varieties of this first problem exist~\cite{golden2008vehicle}, and
for our proposes we shall refer to it as the {\em vehicle routing
problem} (VRP). We begin our investigation supposing that VRP has been
solved heuristically, and therefore after the assignment of locations to
routes has been made.

The second vexing problem is determining how to evaluate the \emph{cost
to serve} each location. Specifically, the vendor must decide how to
apportion the costs of transportation to each location in an equitable
and economically efficient manner. The results of cost to serve analysis
have a variety of important applications. Using the allocation directly
the vendor can of course charge locations their allocated portion of the
transportation costs. More realistically, vendors use the cost
allocations when (re-)negotiating contracts with customers. Supply chain
managers may also reference a cost allocation when deciding whether or
not to continue trade with a particular location. Finally, provided
market conditions are favourable, sales managers can be instructed to
acquire new customers in territories where existing cost allocations are
relatively high in order to share the cost of delivery among more
locations.

Addressing the second vexing problem, this paper stems from our work
with a fast-moving consumer goods company that operates nationally both
in Australia and New Zealand. The company serves nearly 20,000 locations
weekly using a fleet of 600 vehicles. Our industry partner is under
increasing economic pressure to realise productivity improvements
through optimisation of their logistical operations. 
A key aspect of that endeavour is to  understand the contribution of
each location to the overall cost of distribution. 
In this study we focus at the individual route level for a single truck,
where we apportion the costs of the deliveries on that route to the
constituent locations. We formalise this setting as a \emph{traveling
salesperson game} (TSG)~\cite{potters1992},  where the cost to serve all
locations is given by the solution to an underlying \emph{traveling
salesperson problem} (TSP). Formalised as a game, we can use principled
solution concepts from cooperative game theory, notably the Shapley
value~\cite{shapley1953}, in order to allocate costs to locations in a
fair and economically efficient manner.

Calculating the Shapley value of a game is a notoriously hard
problem~\cite{CEW11a}. A direct calculation for a TSG requires the
optimal solution to exponentially many distinct instances of the TSP.
Sampling procedures can  be used for approximating the value, however
these too do not offer a practical solution for larger games. Moreover,
we prove  that there is no polynomial-time $\alpha$-approximation of the
Shapley value for any constant $\alpha \geq 1$ unless $\p =\np$.
To circumscribe these computationally difficulties, this work explores
six proxies\footnote{We use the word \emph{proxy} instead of
\emph{approximation} to ease discussion and, technically, many of these
measures are stand-ins for the Shapley value, not approximations of it.}
for the Shapley value. Our proxies offer tractable alternatives to the
Shapley value, and in some cases appeal to  other allocation concepts
from cooperative  game theory~\cite{PeSu07a,curiel2008}. Two of our
proxies appeal to the well-known \emph{Held-Karp} and
\emph{Christofides} TSP heuristics, respectively.

We report a detailed experimental comparison of proxies using a large
corpus of Synthetic  Euclidean games, and problems derived from
real-world tours calculated for fast-moving  consumer goods businesses
in the cities of Auckland (New Zealand), Canberra, and Sydney
(Australia). We highlight three computationally tractable proxies that
give good approximations of the Shapley value in practice. Our
evaluation also considers the ranking of locations---least to most
costly---induced by the Shapley and  proxy values. 
Ranking is relevant when, for example, we are just interested 
in identifying the most costly locations to serve. We again find that
three of our proxies provide good ranking accuracy taking the rank
induced by the Shapley value as the target.

 \section{Preliminaries}\label{sec:preliminaries}

We use the framework of cooperative game theory to gain a deeper
understanding of our delivery and cost allocation
problems~\cite{PeSu07a,CEW11a}. In cooperative game theory, a game is a
pair $(N,c)$. $N$ is the set of agents and the second term
\mbox{$c:2^N\rightarrow\mathbb{R}$}  is the \emph{characteristic
function}. Taking $S\subseteq N$, $c(S)$ is the cost of subset $S$. A
\emph{cost allocation} is a vector $x=(x_0,\ldots, x_n)$ denoting that
cost $x_i$ is allocated to agent $i\in N$. We restrict our attention to
economically \emph{efficient} cost allocations, which are allocations
satisfying $\sum_{i\in N}x_i=c(N)$.

For any cooperative game $(N,c)$, a \emph{solution concept} $\phi$
assigns to each agent $i\in N$ the cost $\phi_i(N,c)$. There may be more
than one allocation satisfying the properties of a particular solution
concept, thus $\phi$  is not necessarily single-valued, and might give a
set of cost allocations \cite{PeSu07a}. A minimal requirement of a
solution concept is \emph{anonymity}, meaning that the cost allocation
must not depend on the identities of locations. Prominent solution
concepts include the \emph{core}, \emph{least core}, and the
\emph{Shapley value}. For $\epsilon\geq 0$, we say that cost allocation
$\phi$ is in the (multiplicative) \emph{$\epsilon$-core} if $\sum_{i\in
S}\phi_i\leq (1 + \epsilon)c(S)$ for all $S\subseteq N$
\cite{faigle1993some}. The $0$-core is referred to simply as the
\emph{core}. Both the core and $\epsilon$-core can be empty. The
$\epsilon$-core which is non-empty for the smallest possible $\epsilon$
is called the least core. This particular $\epsilon$ is referred to as
the \emph{least core value}.\footnote{The $0$-core of the transport game
we focus on in this work can be empty.  However, if the game is convex,
the Shapley value lies in the core \cite{Tamir198931}.}

Our work focuses on the single-valued solution concept called the
\emph{Shapley value}~\cite{shapley1953}. Writing $\shapley_i(N,c)$ for
the Shapley value of agent $i$, formally we have:

\begin{equation} \label{eq:shapleyexactmargins} \shapley_i(N,c)  = 
\sum_{S \subset N \setminus \{i\}} \frac{|S|!(|N|-|S|-1)!}{|N|!} (c(S
\cup \{i\})-c(S)). \end{equation}

\noindent In other words, the Shapley value divides costs based on the
marginal cost contributions of agents

In the \emph{traveling salesperson problem (TSP)} a salesperson must
visit a set of locations $N=\{1,\ldots, n\}\cup\{0\}$ starting and
ending at a special \emph{depot} location $0$. For $i,j\in N\cup\{0\}$
$i\neq j$,  $d_{ij}$ is the strictly positive distance traversed when
traveling from location $i$ to $j$. Here, $d_{ij}=\infty$ if traveling
directly from $i$ to $j$ is impossible. Taking distinct $i,j,k\in
N\cup\{0\}$, the problem is \emph{symmetric} if and only if $d_{ij} =
d_{ji}$ for all $i,j\in N\cup\{0\}$. It satisfies the \emph{triangle
inequality} if and only if $d_{ij} +d_{jk} \geq d_{ik}$~\cite{GaJo79a}.

A TSP is \emph{Euclidean} when each location is given by coordinates in
a (two dimensional) Euclidean space; therefore $d_{ij}$ is the Euclidean
distance between $i$ and $j$. A Euclidean TSP is both symmetric and
satisfies the triangle inequality.

A \emph{tour} is given by a finite sequence of locations that starts and
ends at the depot $0$. The \emph{length} of a tour is the sum of
distances between consecutive locations. For example, the length of
$[0,1,2,0]$ is $d_{01}+d_{12}+d_{20}$. An  optimal solution to a TSP is
a minimum length tour that visits every location. It is \np-hard to 
find an optimal tour, and generally there is no $\alpha$-approximation
for any $\alpha$ unless $\p=\np$.  An $\alpha$-approximation for a given optimisation problem 
is an algorithm that runs on an instance $x$ and returns
a feasible solution $F(x)$ which has cost $c(F(x))$ related
to the optimal solution $OPT(x)$ by the following relation \cite{papado:b:compcomplexity}:
$$\frac{|c(F(x)) - c(OPT(x))|}{\max\{c(OPT(x)), c(F(x))\}} \leq \alpha.$$
Informally, $\alpha$ is a bound on the relative error of an approximation
function. When $\forall i,j$ $d_{ij}$ are finite, the triangle
inequality  and symmetry hold, then polynomial-time approximations
exist~\cite{HeKa62a,Chri1976}.

Given a TSP, the corresponding \emph{traveling salesperson game} (TSG)
is a pair $(N,c)$. $N$ is the set of agents which corresponds to the set
of locations.\footnote{From here on we focus on a restriction of general
games to delivery games (TSGs) and therefore we use {\em location}
instead of {\em agent} for ease of exposition.} The second term
\mbox{$c:2^N\rightarrow\mathbb{R}$}  is the characteristic function.
Taking  $S\subseteq N$, $c(S)$ is the length of the shortest tour of all
the  locations in $S$. A \emph{cost allocation} is a vector
$x=(x_1,\ldots, x_n)$ denoting that cost $x_i$ is allocated to location
$i\in N$. For the special depot location, we shall always take $x_0 =0$
\cite{potters1992}

\section{Some Properties of the Shapley Value}

The Shapley value has many attractive properties when used as a cost
allocation scheme by a vendor. For example, whereas the $0$-core can be
empty, and therefore not yield any allocation at all~\cite{Tamir198931},
the Shapley value always exists in the TSG setting. The Shapley value is
also, for general games, the unique assignment of costs  that satisfies
three important properties: (1) \emph{anonymity}, the cost allocated to
a particular location is dependent only on the impact it has to the
total cost; (2) \emph{efficiency}, the entire cost of serving all $N$
locations is allocated; and (3) \emph{strong monotonicity}. 
The latter states that if the total cost of a coalition is reduced, 
then the allocation to all locations participating in that coalition 
is either reduced or not increased~\cite{young1985monotonic}. 
Formally, the marginal contribution
from player $i$ to the total cost of coalition $S$ is: \[ c^i(S) =
\begin{cases} c(S) - c(S \setminus \{i\}) & \text{if } i \in S \\ c(S
\cup \{i\}) - c(S) & \text{if } i \notin S. \end{cases} \]

Strong monotonicity can be stated as: $\forall S: c^i(S) \geq {c'}^i(S)
\Longrightarrow \phi_i(N,c) \geq \phi_i(N,c')$. Due to these and other
derivative axiomatic properties, the Shapley value has been termed ``the
most important normative payoff division scheme'' in cooperative game
theory~\cite{Wint02a}.

Another important property of the Shapley value is that it would
allocate any fixed costs incurred when serving a location to that
location alone. If we treat a variant of the TSG where some locations
have an associated fixed cost in addition to their transportation
costs--- e.g.~parking and loading fees ---then the Shapley value will
allocate those fixed costs to the associated locations. Formally, given
a fixed cost $f(i)$ of serving location $i$, $f(i)$ does not need to be
removed before computing the Shapley value, as follows. Suppose $c$ is
the characteristic function of the TSG defined above, and $c'$ satisfies
the identity $c'(S)=c(S) + \sum_{i\in S}f(i)$.

\begin{proposition} $\shapley_i(N,c')=\shapley_i(N,c)+f(i).$
\end{proposition}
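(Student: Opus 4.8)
The plan is to exploit the linearity of the Shapley value in the characteristic function, or — if one prefers a self-contained argument — to plug $c'$ directly into the marginal-contribution formula \eqref{eq:shapleyexactmargins} and simplify. I will sketch the direct route. First I would write down $\shapley_i(N,c')$ using \eqref{eq:shapleyexactmargins}, so that the summand becomes $\frac{|S|!(|N|-|S|-1)!}{|N|!}\bigl(c'(S\cup\{i\})-c'(S)\bigr)$ over all $S\subseteq N\setminus\{i\}$. Then I would substitute the defining identity $c'(S)=c(S)+\sum_{j\in S}f(j)$ into both $c'(S\cup\{i\})$ and $c'(S)$ and observe that, since $i\notin S$, the fixed-cost sums differ by exactly $f(i)$: that is, $c'(S\cup\{i\})-c'(S)=\bigl(c(S\cup\{i\})-c(S)\bigr)+f(i)$.

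The second step is to split the resulting sum into two pieces. The first piece reproduces $\shapley_i(N,c)$ verbatim. The second piece is $f(i)\sum_{S\subseteq N\setminus\{i\}}\frac{|S|!(|N|-|S|-1)!}{|N|!}$, and the key routine fact is that these Shapley coefficients sum to $1$ — equivalently, this is the statement that the Shapley value of the unit game in which only coalitions containing $i$ have value $1$ assigns $1$ to $i$; or, combinatorially, it is the identity $\sum_{s=0}^{|N|-1}\binom{|N|-1}{s}\frac{s!(|N|-s-1)!}{|N|!}=\sum_{s=0}^{|N|-1}\frac{1}{|N|}=1$ obtained by grouping the $S$'s by cardinality $s$. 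Hence the second piece equals $f(i)$, and combining the two pieces gives $\shapley_i(N,c')=\shapley_i(N,c)+f(i)$, as claimed.

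There is no real obstacle here; the only thing to be careful about is the bookkeeping with the depot, since for the TSG we fix $x_0=0$ and the index set in \eqref{eq:shapleyexactmargins} is over $N$ (locations), not $N\cup\{0\}$. I would simply note that the argument above is carried out for an arbitrary location $i\in N$ and makes no use of any structure of $c$ beyond set-additivity of the perturbation, so the convention on the depot is immaterial. Alternatively, one can cite additivity of the Shapley value directly: $c'$ is the sum of $c$ and the additive (inessential) game $S\mapsto\sum_{j\in S}f(j)=\sum_{j\in N}f(j)\,u_{\{j\}}(S)$, whose Shapley value assigns $f(j)$ to each $j$ and $0$ to everyone else, so $\shapley_i(N,c')=\shapley_i(N,c)+f(i)$ immediately. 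I would present whichever of the two is shorter and fold the other into a remark.
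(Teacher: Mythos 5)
Your proposal is correct and takes essentially the same route as the paper: substitute the identity $c'(S)=c(S)+\sum_{j\in S}f(j)$ into the marginal-contribution formula, split off the $f(i)$ term, and use the fact that the Shapley coefficients $\frac{|S|!(|N|-|S|-1)!}{|N|!}$ over $S\subseteq N\setminus\{i\}$ sum to $1$. If anything, your write-up (starting from $\shapley_i(N,c')$, with the additivity-of-the-Shapley-value remark as an alternative) is cleaner than the paper's displayed chain of equalities, which contains sign slips but relies on exactly the same two ingredients.
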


\begin{proof} \begin{align*} \shapley_i(N,c)&=\sum_{S\subseteq
N\setminus \{i\}}(|S|!)(|N|-|S|-1)!(c(S\cup \{i\})-c(S))/{|N|!}\\
&=\sum_{S\subseteq N\setminus \{i\}}(|S|!)(|N|-|S|-1)!((c'(S\cup
\{i\})+f(i))-c(S))/{|N|!}\\ &=\sum_{S\subseteq N\setminus
\{i\}}(|S|!)(|N|-|S|-1)!(c'(S\cup \{i\})-c'(S))/{|N|!} +
\sum_{S\subseteq N\setminus \{i\}}(|S|!)(|N|-|S|-1)!(f(i))/{|N|!}\\
&=\shapley_i(N,c')+ (\sum_{S\subseteq N\setminus
\{i\}}(|S|!)(|N|-|S|-1)!/{|N|!})(f(i))\\ &=\shapley_i(N,c')+
(|N|!/|N|!)(f(i))\\ &=\shapley_i(N,c')+ (f(i)) \end{align*}

\end{proof}

We also have that by charging locations according to the Shapley value,
we can expect to incentivize them to \emph{recruit} new customers in
their vicinity. Locations recruiting for a vendor can reasonably expect
to lower the transportation costs they are allocated. In detail,
consider a vendor trading with locations $N=\{1..|N|\}$. From the
vendors perspective, adding a new location, $|N|+1$, to an existing
delivery route is clearly a good idea if the revenue generated by
delivering to that location is greater than the marginal cost $c(N\cup
\{|N|+1\})-c(N)$ of the new delivery. Because existing locations in the
vicinity of $|N|+1$ are already paying for deliveries, charging at the
threshold $c(N\cup \{|N|+1\})-c(N)$ however will typically be unfair. In
that case existing customers would likely be subsidizing new customers,
and therefore disincentivize  to find new business for the vendor. The
Shapley value mitigates this, and can be expected to provide recruitment
incentives. 
Making this discussion more concrete, suppose the game is a Euclidean
scenario with $N=\{x\}$ a single agent at distance $100$ from the depot
and the new agent $y$ is at distance $5$ from $x$. The transportation
cost of serving $\{x,y\}$ can be as high as $210$. Clearly, charging the
new agent at most $c(\{x,y\})-c(\{x\}) = 10$ while $x$ continues to pay
around $200$ is unfair. On the other hand, if the vendor allocates costs
according to the Shapley value, the existing customer's costs
\emph{decrease} when the new agent joins.

Related to the above discussion, if the characteristic function is
concave then  the Shapley value lies in the non-empty $0$-core.
Formally, concavity is satisfied if for all $S\subseteq N\setminus
\{i\}: c(S\cup\{i\}\cup \{|N|+1\})-c(S\cup \{|N|+1\}) <
c(S\cup\{i\})-c(S)$. Charging customers according to core values
actually guarantees that they are incentivized to recruit.
Specifically, for all $i \in N: \shapley_i(N\cup \{|N|+1\},c)<
\shapley_i(N,c)$. In other words, the Shapley allocation of costs to
existing locations decreases when a new customer $|N|+1$ is added.
Unfortunately general TSGs do not necessarily have  concave
characteristic functions. However, concavity in expectation is all that
is required for existing locations to realise savings. In practice there
are synergies, and incentives for further recruitment on routes where we
charge according to the Shapley value. In our empirical data, even when
the game is not concave we frequently observe such incentives given a
Shapley allocation. And compared to charging customers according to
their marginal contribution to costs, we do not explicitly
disincentivize recruitment. Summarizing, if an agent knows that all
locations are charged according to the Shapley value, they can typically
expect incentives to recruit new locations in their vicinity.

\section{Computing the Shapley Value}

Our focus now shifts to calculation of the Shapley value. 
Considering games in general, it should be noted that a direct
evaluation of Equation~\ref{eq:shapleyexactmargins} requires we sum over
exponentially many quantities.
Such a direct approach to the calculation of the Shapley value is
therefore not practical for any game of a reasonable size.
Indeed, starting from the earliest literature~\cite{MaSh62a}, authors
motivate auxiliary restrictions and constraints, for example on the size
and importance of coalitions,  in order to describe games where the
Shapley value can be calculated.
More recent literature proposes a variety of approaches to directly
calculate the Shapley value for certain games~\cite{CoSa06,IeSh05},
however efficient calculation of the value for TSGs has remained elusive.
We require an accurate baseline in order to experimentally evaluate the 
proxies we later develop for the Shapley value of the TSG. To that
purpose we investigate exact and general sampling-based approximations
of the Shapley value. We treat our transport setting specifically,
describing a novel procedure for an exact evaluation of the Shapley
value of a TSG by following Bellman's dynamic programming solution to
the underlying TSP. 
We also discuss how in general the Shapley value can be evaluated
approximated using a sampling procedure. We pursue that sampling
approach in TSGs, considering two distinct characterisations of the
Shapley value which are amenable to sampling-based evaluation. We
performed a detailed empirical study of sampling-based evaluations using
Synthetic TSGs instances where the underlying TSP model is Euclidean. 
In closing we give a hardness proof relating to the computation of the
Shapley value of TSGs, showing that  approximation of the Shapley value
in that game is intractable.

\subsection{Dynamic Programming}

We found that the steps performed by a {\em dynamic programming} (DP)
solution to the underlying TSP expose the margins---i.e. terms of the
form $c(S \cup \{i\})-c(S)$---that are summed over in a direct
evaluation of Equation~\ref{eq:shapleyexactmargins}. The Shapley value
of a TSG can therefore be computed more-or-less as a side effect while a
DP procedure computes the optimal solution to the underlying TSP.

These ideas can be made concrete by following the  procedure outlined
by~\citeauthor{bellman1962dynamic}~\citeyear{bellman1962dynamic}.
The equations at the heart of that TSP solution procedure  recursively
define a cost function, $c(S,j)$, which is the shortest path through all
locations in $S$ starting at the depot $0$ and ending at
$j$.\footnote{Our notations depart slightly from Bellman's seminal work.
Whereas we take $c(S,j)$ to be the cost of each optimal tour-prefix path
(i.e. starting at the depot $0$ and ending at $j$), Bellman originally
took $c(S,j)$ to be the cost of optimal tour-suffix paths starting from
$j$, traversing the locations in $S$ and ending at the depot $0$.}
\begin{eqnarray*} c(\{j\},j) & = & d_{0j} \\ c(S,j) & = & \min_{k \in S,
k \neq j} (c(S \setminus \{j\},k) + d_{kj}) \end{eqnarray*}

\noindent Following the above recursive definition, a DP process
iteratively tabulates $c(S,j)$ for successively larger coalitions $S$.
At iteration $n$ that procedure shall tabulate all quantities $c(S,j)$
taking $|S|=n$. By computing the values $c(S, 0)$ for $|S|<|N|$, we have
access to the characteristic function evaluation $c(S)$ of subtours of
locations in $S$, as follows:

\begin{eqnarray*} c(S)  = c(S, 0) =  \min_{j \in S} (c(S,j)+d_{j0}).
\end{eqnarray*}

\noindent Therefore, one can incrementally evaluate the sum in
Equation~\ref{eq:shapleyexactmargins} for a TSG, while calculating
optimal subtours for progressively larger coalitions withing a classical
DP procedure. 
Intuitively, as we compute a tour using Bellman's algorithm, by
additionally evaluating $c(S, 0)$ for each encountered subset $S$ we
obtain all quantities required to calculate the marginal costs of
locations. 
It is worth noting that the dynamic programming approach does not
address the exponential number of subsets we need to sum over in the
evaluation of Equation~\ref{eq:shapleyexactmargins}.
We have therefore highlighted a concrete relationship between a
classical procedure for the TSP and the Shapley value of the
corresponding TSG. However, this observation does not yield a practical
algorithm for games with many more than a dozen locations.

\subsection{Sampling-Based Evaluation}\label{sec:sample}

Using either the DP solution, or indeed the state-of-the-art TSP solver
Concorde~\cite{Concorde} in a direct calculation of the Shapley value,
we find it impractical to compute the  exact Shapley value for instances
of the TSG  larger than about 15 locations.
A direct method requires an exponential number of characteristic
function computations, each requiring we solved an NP-hard problem.
To obtain an accurate baseline for reasonably sized games our
investigation now turns to sampling procedures.
Indeed, because the Shapley value is a population average it is
reasonable to estimate the value using a sampling procedure.

The first use of sampling to approximate the Shapley value of games was
proposed and studied by Mann and Shapley \cite{MaSh60a}. Perhaps the
most elegant and general  method proposed by Mann and Shapley is called
{\em Type-0} sampling. This method repeatedly draws uniformly at random
a permutation of the agents. The marginal cost of each agent $i$ is then
calculated, by taking the difference in the cost of serving agents up to
and including $i$ in the permutation, and the cost of serving the agents
proceeding $i$. By repeatedly sampling permutations and the marginal
costs of including each agent $i$ in this way, overtime we arrive at an
unbiased estimate of the Shapley value. 
Further elaboration of this procedure for the TSG is given below. Type-0
sampling has appeared over the years in various guises, and is reported
under a variety  of different names in the literature on approximating
{\em power indices}---of which the Shapley value is but one---in
coalitional games.
%
A recent variant of Type-0 sampling appears as the  \emph{ApproShapley}
algorithm by Castro et al.~ in a paper which proves asymptotic bounds on
the sampling  error of that method~\cite{Castro:2009}.
\emph{ApproShapley} shall be the focus of our sampling work, however
prior to giving its details, it is worth briefly reviewing other classes
of game where sampling-based evaluations have been explored. 
Bachrach et al.~have previously examined Type-0 sampling in {\em simple
games}---i.e. cost of a coalition is either $0$ or $1$---deriving bounds
that are {\em probably approximately correct}. In other words, the
actual Shapley value lies within a given error range with high
probability~\cite{BMR+10}.
Continuing in this line of work, Maleki et al.~show that if the range or
 variance of the marginal contribution of the players is known ahead of
time, then more focused (termed \emph{stratified}) sampling techniques
may be able to decrease the number of samples required to achieve a
given error bound \cite{MTH+13}. Other methods of approximating the
Shapley value, specifically for weighted voting games, have appeared in
the literature including those based on multi-linear extensions
\cite{Leech03,Owen72} and focused random sampling \cite{FWJ08a,FWJ07a}

To calculate the Shapley value of a TSG via sampling we employ the
Type-0 method suggested by Mann and Shapley \cite{MaSh60a}, called
\emph{ApproShapley} by~\citeauthor{Castro:2009}. 
The pseudocode is given in Algorithm~\ref{algo:Appro}.
Writing $\pi(N)$ for the  set of $|N|!$ permutation orders of locations
$N$, taking $\Pi\in\pi(N)$ we write $\Pi_i$ for the subset of $N$ which
precede location $i$ in $\Pi$. An alternative formulation of the Shapley
value can be characterised in terms of $\pi(N)$, by noting that value
equates with marginal cost of each location when we construct coalitions
in all possible ways, as follows.

\begin{equation} \shapley_i(N,c)  = \frac{1}{|N|!} \sum_{\Pi \in \pi(N)}
(c(\Pi_i \cup \{i\})-c(\Pi_i)) \end{equation}

For each sampled permutation, \emph{ApproShapley} evaluates the
characteristic  function for each $i \leq |N|$ computing the length of
an optimal  tour for the set of locations in the $i$-sized prefix.  By
construction, the cost allocation produced by  \emph{ApproShapley} is
economically efficient. As a small but important optimization, in our
work we cache the result of each evaluation of the characteristic
function to avoid solving the same TSP twice.

\begin{algorithm} \caption{ApproShapley} \label{algo:Appro} \small
\renewcommand{\algorithmicrequire}{\textbf{Input}:}
\renewcommand{\algorithmicensure}{\textbf{Output}:}

\begin{algorithmic} \REQUIRE $N = \{1, \ldots, n\}$ locations with cost
$c(S)$ to serve a subset $S \subseteq N$ and $m$ number of iterations.
\ENSURE $\shapley_i$ for all $i \in |N|$ \end{algorithmic}

\algsetup{linenodelimiter=\,} \begin{algorithmic}[1] \small
\STATE{$\shapley \leftarrow []$} \FOR{$i \leftarrow 1$ \TO $|N|$}
\STATE{$\shapley_i \leftarrow 0$} \ENDFOR

\STATE{$SampleNumber \leftarrow 1$}

\FOR{$SampleNumber \leftarrow 1$ \TO $m$} \STATE{Randomly select a
permutation of the locations $Perm$ from $\pi(N)$} \STATE{$S \leftarrow
\emptyset$} \FOR{$i \leftarrow 1$ \TO $|N|$} \STATE{$S \leftarrow S \cup
\{Perm_i\}$} \STATE{$\shapley_{Perm_i} \leftarrow \shapley_{Perm_i} +
(c(S) - c(S \setminus \{Perm_i\} ) )$} \ENDFOR \ENDFOR

\STATE{$TotalValue \leftarrow \sum_{i \in N} \shapley_i$} \FOR{$i
\leftarrow 1$ \TO $|N|$} \STATE{$\shapley_i \leftarrow \shapley_i *
(\nicefrac{c(N)}{TotalValue}) $} \ENDFOR \RETURN $\shapley$
\end{algorithmic} \end{algorithm}

In our work, we also considered an alternative sampling method, which
samples not over permutations, but rather over subsets of locations as
implied by the formulation in Equation~\ref{eq:shapleyexactmargins} of
Section~\ref{sec:preliminaries}. There are fewer subsets than there are
permutations, a fact which we supposed could be an advantage in a
sampling-based evaluation of the Shapley value.
We name this method \emph{SubsetShapley}, which by construction also
produces an economically efficient allocation.
Later we empirically find the \emph{SubsetShapley} performs worse than
\emph{ApproShapley}, however because this approach does not yet appear
in the literature we believe it worthy of discussion. 
\emph{SubsetShapley} follows Algorithm~\ref{algo:Appro} except for Lines
7--10. In this case at every iteration of the loop at Line 6 we draw a
set $S_i\subseteq N\setminus \{i\}$ uniformly at random for each
location $i$. For each $i$, the update to $\shapley_i$ is then the
weighted marginal contribution, formally $\shapley_{i} \leftarrow
\shapley_{i} + |S|!(n-|S|-1)!(c(S\cup i) - c(S))$. The coefficient
$|S|!(n-|S|-1)!$ ensures that for each subset $S_i$ of locations
sampled, we account for the number of  permutations where locations
$S_i$  are ordered before location $i$.

In order to test which sampling method performs best, we ran convergence
tests on 50 random  instances for up to  5000 iterations. The instances
were Euclidean TSGs on a 1,000x1,000 dimensional square with 10
locations, each at coordinates given by a pair of 32-bit floating point
numbers. For each instance we calculated the exact Shapley value of
every location, so that we could compare the sampled allocations with
their exact counterparts. Figure~\ref{fig:sample-compare} graphically
summarises the results from this experimentation.

We find the \emph{ApproShapley} method of sampling over permutations
provides a faster convergence. After as few as $100$ iterations
\emph{ApproShapley} achieves an average error of $\approx 10\%$ per
location with a maximum error of $\approx 20\%$. Additionally, the
stability of the updates for \emph{ApproShapley}, as measured by the
percentage of the allocation that is re-assigned per iteration, is
already quite good  after 40 iterations. \emph{ApproShapley} quickly
converges to a correct and stable answer which it continues to refine as
more samples are taken. In practice, \emph{ApproShapley} achieves a
lower error, earlier, and continues to converge on an error of 0.0
faster than \emph{SubsetShapley}.

\begin{landscape} \begin{figure}[h!]\label{fig:sample-compare}
\centering 
\includegraphics[scale=0.60]{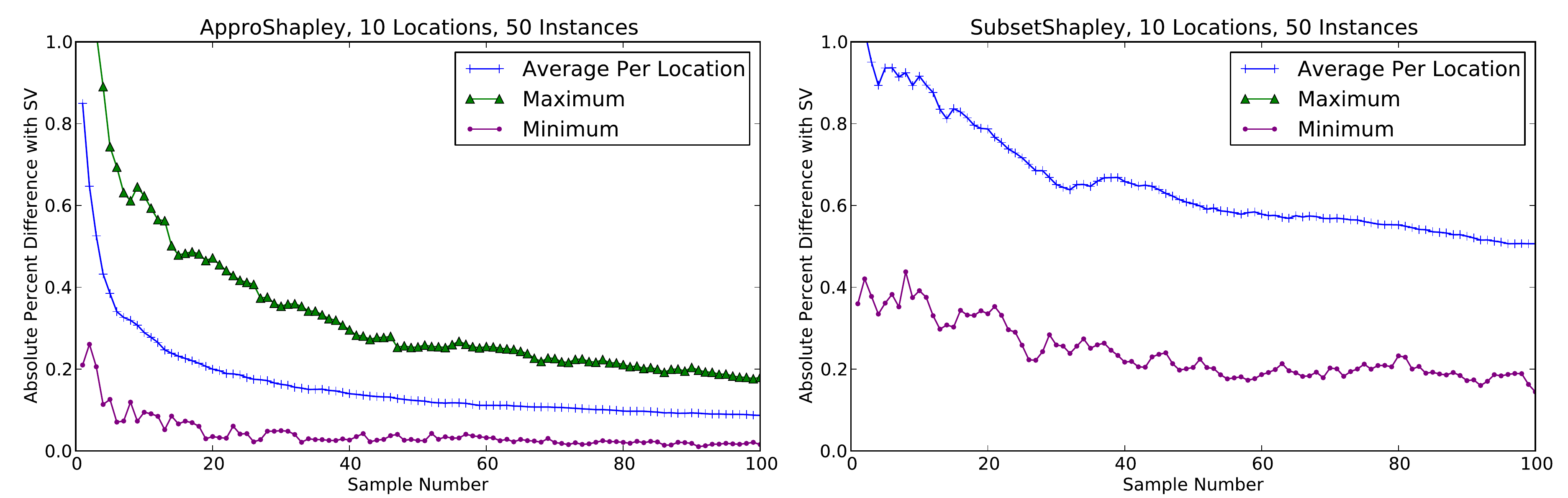}
\includegraphics[scale=0.60]{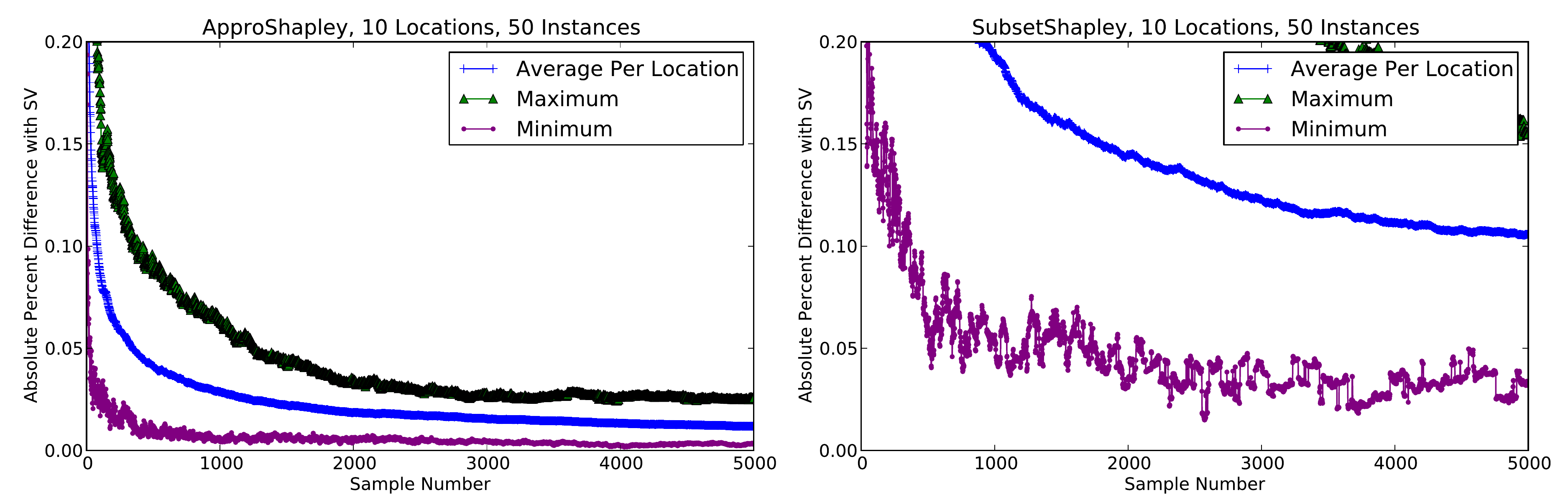}
\caption{Comparison of the performance of ApproShapley (left) and
SubsetShapley(right) for 100 iterations (top) and 5000 iterations
(bottom) for TSGs with 10 locations. The graphs show minimum and maximum (outside
the graph range for SubsetShapley)
error in the Shapley value for a single location averaged over 50
instances.  The error is computed as a percentage difference between the
actual Shapley value and the one computed by sampling.  Additionally,
the average percent error for all locations per iteration is shown.}
\end{figure} \end{landscape}

\subsection{Theoretical Hardness}

We now consider, for the most general setting of the TSG, the difficulty
of calculating the Shapley value. Below we prove that the Shapley value
of a location in the TSG cannot be approximated  within a constant
factor in polynomial-time unless $\p =\np$.

\begin{theorem} There is no polynomial-time $\alpha$-approximation of
the Shapley value of the location in a TSG for constant $\alpha \geq 1$
unless $\p =\np$. \end{theorem}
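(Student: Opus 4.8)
\section*{Proof proposal}

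The plan is to reduce, in polynomial time, from an \np-complete problem --- say \textsc{Hamiltonian Cycle}, or \textsc{Hamiltonian Path} with prescribed endpoints --- to the question of whether the Shapley value of one designated location is ``small'' or ``large'', arranging a multiplicative gap between the two cases that grows with the input and hence beats $\alpha^{2}$ for every fixed $\alpha$ simultaneously. A polynomial-time $\alpha$-approximation of the Shapley value, applied at that location, would then separate the two cases and decide the \np-complete problem, so $\p=\np$.

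Concretely, from a graph $G=(V,E)$ I would build a TSP instance whose locations are $V$ together with one extra ``probe'' location $i^{\ast}$, plus the usual depot; present edges of $G$ get distance $1$ and absent pairs get a large but \emph{finite} weight $M$ polynomial in $|V|$ (finite so that every $c(S)$ is finite), with the distances to and from $i^{\ast}$ picked so that $\shapley_{i^{\ast}}(N,c)>0$ in every instance and so that an optimal sub-tour of a coalition $S$ can absorb $i^{\ast}$ at only $O(1)$ extra cost exactly when that sub-tour has a prescribed shape --- for instance, certain ``port'' vertices (one of them possibly the depot) fall consecutively --- and such a sub-tour is realizable for the relevant coalitions precisely when $G$ is a yes-instance; otherwise servicing $i^{\ast}$ forces a detour of length $\Theta(M)$. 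One then shows $\shapley_{i^{\ast}}(N,c)\le\epsilon$ in yes-instances for a fixed constant $\epsilon$ independent of $G$, and $\shapley_{i^{\ast}}(N,c)\ge B$ with $B=\Omega(\mathrm{poly}(|V|))$ in no-instances. Finally, given a claimed polynomial-time $\alpha$-approximation $A$, compute $\hat{\phi}=A(N,c,i^{\ast})$ --- which satisfies $\shapley_{i^{\ast}}(N,c)/\alpha\le\hat{\phi}\le\alpha\,\shapley_{i^{\ast}}(N,c)$ --- and report ``yes-instance'' iff $\hat{\phi}\le\alpha\epsilon$; since $B/\alpha>\alpha\epsilon$ once $M$ is large enough, this is correct and runs in polynomial time (small instances being decided by brute force).

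The step I expect to be the main obstacle is constructing the probe gadget. Because $\shapley_{i^{\ast}}(N,c)$ is a weighted average of exponentially many marginals $c(S\cup\{i^{\ast}\})-c(S)$, it does not suffice to control $c(N)$ and $c(N\setminus\{i^{\ast}\})$; one must tie ``cheap absorption of $i^{\ast}$'' to the \emph{global} Hamiltonicity of $G$ in a way that behaves uniformly over coalitions of every size (including small ones, which contain few of the gadget vertices), keep $\shapley_{i^{\ast}}(N,c)$ strictly positive, and keep all the relevant optimal sub-tour lengths explicitly computable so the $\epsilon$ and $B$ bounds can actually be proved. Making a local attachment enforce a genuinely global property is the delicate part: attempts in which $i^{\ast}$ has only a few short edges tend to make it unconditionally expensive for most coalitions (so $\shapley_{i^{\ast}}$ is always large), whereas attempts in which it has many short edges make it unconditionally cheap. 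I should note that the tempting shortcut --- approximate $c(N)=\sum_{i}\shapley_{i}(N,c)$ by summing $\alpha$-approximations of all $|N|$ locations' Shapley values and invoking the inapproximability of general TSP --- does not work, since in the non-metric instances that make TSP inapproximable the individual Shapley values can be negative (so the summed estimate need not approximate $c(N)$), while in any regime that forces them nonnegative $c(N)$ is itself constant-factor approximable; hence a direct, sign-controlled reduction of the above kind appears to be necessary.
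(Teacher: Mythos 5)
Your high-level plan---reduce from Hamiltonian cycle by engineering a gap in one location's Shapley value that any polynomial-time $\alpha$-approximation would detect---has the same skeleton as the paper's proof, but the proposal contains a genuine gap: the reduction is never actually constructed. Everything rests on the ``probe gadget'' whose distances are supposed to make $c(S\cup\{i^\ast\})-c(S)$ uniformly small over coalitions exactly when $G$ is Hamiltonian and force a $\Theta(M)$ detour otherwise, together with the explicit bounds $\shapley_{i^\ast}\le\epsilon$ (yes-instances) and $\shapley_{i^\ast}\ge B$ (no-instances). You yourself identify this gadget as ``the main obstacle,'' and you only explain why naive attempts fail; with the gadget missing there is nothing to verify and no theorem.

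The obstacle is largely self-inflicted by two constraints you impose: that the penalty weight $M$ be polynomial in $|V|$, and that cheap absorption of the probe be controlled ``uniformly over coalitions of every size.'' The paper needs neither, and needs no extra probe vertex: it weights pairs in (the transitive closure of) $E$ by $1$ and all remaining pairs by $n!\,\alpha$, a number that still takes only $O(n\log n+\log\alpha)$ bits, so the instance remains polynomial-sized. If $G$ is Hamiltonian, every marginal is $1$ and hence every location's Shapley value is at most $1$; if $G$ is not Hamiltonian, one exhibits a \emph{single} permutation (grow the coalition while a cheap Hamiltonian cycle survives, then insert $j$) in which $j$'s marginal is at least $n!\,\alpha$, and since the Shapley value is an average over the $n!$ permutation marginals, that one permutation alone forces $\shapley_j\ge\alpha$. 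In other words, once the penalty is of order $n!$, no uniform control over coalitions is required---precisely the move your polynomial-$M$ restriction forbids, which is why your gadget search stalled. (Both your sketch and the paper's argument still owe a remark that the remaining marginals are not negative, since these instances violate the triangle inequality; but the decisive missing piece in your proposal is the construction itself.)
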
 \begin{proof} Let $G(N,E)$ be a graph
with nodes $N$ and edges $E$. If an $\alpha$-approximation exists we can
use it to solve the \np-complete Hamiltonian cycle problem on G. First,
from $G$ construct a complete weighted and undirected graph $G'(N,E')$,
where $(i,j)$ has weight $1$ if $(i,j)$ is in the transitive closure of
$E$, and otherwise has weight $n!\alpha$. If there is a Hamiltonian
cycle in $G$ then the Shapley value of any $i\in N$ in the TSG posed by
$G'$ is at most $1$. Suppose there is no Hamiltonian cycle in $G$.  We
show there exists a permutation $\pi$ of $N$   that induces a large
Shapley value for any node $j$ as follows: repeatedly add a node from
$N\backslash j$ to $\pi$ so that there remains a Hamiltonian cycle
amongst elements in $\pi$; when there is no such node then add $j$. The
marginal cost of adding $j$ to $\pi$ is at least $n!\alpha$. The Shapley
value of $j$ is the average cost of adding it to a coalition $S\subseteq
N \setminus j$, therefore its Shapley value is at least $\alpha$. Even
though edge weights in $G'$ are large, 
we can represent $G'$ compactly in $O(\log(n)+n^2 \log(\alpha))$ space.
An $\alpha$-approximation on $G'$ for $j$ therefore decides the
existence of the Hamiltonian cycle in $G$. \end{proof}

%
%
%
\section{Proxies for the Shapley Value}\label{sec:proxies}

The use of \emph{ApproShapley} requires that we solve an \np-hard
problem each time we evaluate the characteristic function. This is
feasible for small TSG instances with less than a dozen locations,
however it does create an unacceptable computational burden in larger,
realistically sized games. We now describe a variety of proxies for the
Shapley value that require much less computation in practice.

For the purposes of the discussion below we assume that an optimal tour
for the underlying TSP is given. Not all our proxies yield economically
efficient allocations of the cost of the optimal tour. For that reason,
we define proxies in terms of the induced {\em fractional} allocation of
the cost of the optimal tour. Later, we shall compare these fractional
allocations to that induced by computing the fractional Shapley value,
formally  $\fracshapley_i = \nicefrac{\shapley_i}{\sum_{j \in n}
\shapley_j}$.  This formulation based on fractional allocations allows
us to compare the cost allocations from all the proxies on equal
footing, in a way that would be used in operational contexts such as
transport settings. This formulation also enables us to
efficiently---i.e. in the game theoretic sense---allocate the cost of
the optimal route only having to solve the \np-hard TSP once.

\subsection{Depot Distance ($\fracdist$)}

The distance from the depot --- i.e. $d_{i0}$ for location $i$ --- is
our most straightforward proxy. We allocate cost to location $i$
proportional to $d_{i0}$. The fraction allocation to location $i$ is
$$\fracdist_i = \frac{d_{i0}}{\sum_{i=1}^n d_{i0}}.$$ For this proxy, a
location that is twice as distant from the depot as another has to pay
twice the cost.

\subsection{Shortcut Distance ($\fracshort$)}

Another proxy that is straightforward to calculate and which has been
used in commercial routing software is  the \emph{shortcut distance}.
This is the marginal cost savings of skipping a location when traversing
a given optimal tour. With no loss of generality, suppose the optimal
tour visits the locations according to the sequence $[0,1,2,\dots]$.
Formally, $\short_i = d_{i-1,i}+ d_{i,i+1} - d_{i-1,i+1}$, where
locations $0$ and $n+1$ are the depot, and $d_{ij}$ is the cost of
travel from location $i$ to $j$. The fractional allocation given by the
shortcut distance is then $$\fracshort_i = \frac{\short_i}{\sum_{j\in
N~} \short_j.}$$

\subsection{Re-routed Margin ($\fracreroute$)}

For a location $i\in N$, $\reroute_i$ is defined as $c(N) -
c(N\backslash i))$. The allocation to a player can be computed with at
most two calls to an optimal TSP solver. The fractional allocation  is
$$\fracreroute_i = \frac{(c(N) - c(N\backslash i))}{\sum_{j=N} (c(N) -
c(N\backslash j))}.$$

\subsection{Christofides Approximation ($\fracchris$)}

A more sophisticated proxy is obtained if we use a heuristic when
performing characteristic function evaluations in \emph{ApproShapley},
rather than solving the individual induced TSPs optimally. For this
proxy we use sampling to estimate the Shapley value and we use an
approximation algorithm to estimate the underlying TSP cost. To
approximate the underlying TSP characteristic function, the Christofides
heuristic~\cite{Chri1976}, an $O(N^3)$ time procedure is used. To obtain
a fractional quantity $\fracchris_i$, we divide the allocation to
location $i$ by the sum total of allocated costs. Assuming a symmetric
distance matrix satisfying the triangle inequality, the Christofides
heuristic is guaranteed to yield a tour that is within $3/2$ the length
of the optimal tour.

We briefly describe how the heuristic operates. The TSP instance is
represented as complete undirected graph $G=(\vertices, E)$, with one
vertex in $\vertices$ for each location, and an edge $E$ between every 
distinct pair of vertices. For $i,j\in\vertices$ the edge $(i,j)\in E$
has weight  $d_{ij}$. A tour is then obtained as follows: (1) compute
the minimum spanning tree (MST) for $G$,  (2) find the minimum weight
perfect matching for the complete graph over vertices with odd degree in
that MST  (typically performed using the {\em Hungarian} algorithm), 
(3) calculate an Eulerian tour for the Euler multigraph obtained by
adding edges from Step (2)  to the MST from Step (1), and  (4) obtain a
final tour for the TSP by removing duplicate locations from the Eulerian
tour.

\subsection{Nested Moat-Packing ($\fracmoat$)}

\newcommand{\moat}{\ensuremath{P}}

\begin{figure} \centering
\includegraphics{./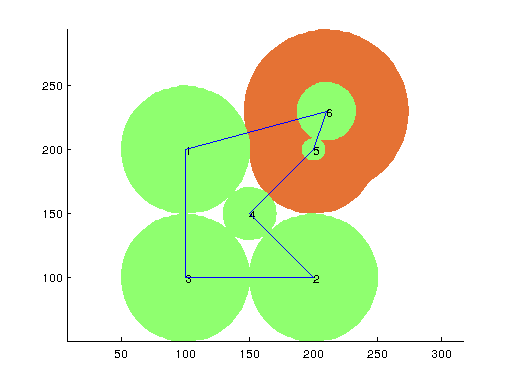}

\caption{Depicts an optimal nested moat-packing, and the optimal tour
(blue-line) for a TSP scenario with 6 locations. The locations are
indicated by their digit labels (e.g. ``1'', ``2'', $\ldots$), and occur
at the center of the green moats, which appear as disks. Each green disk
depicts a distinct moat associated with one location. The orange region
is the moat associated with the set of locations $\{5,6\}$. Following
the nesting-scheme, the orange moat surrounds an internal region
comprising the green moats around locations $5$ and $6$, respectively.
There are $7$ moats in total, and the optimal tour in this case
traverses the width of each moat exactly twice.}
\label{fig:examplesofmoats} \end{figure}

A cost allocation method based on a nested moat-packing was first
introduced by \citeauthor{faigle1998}~\citeyear{faigle1998}.  This
allocation is obtained by apportioning a grand-coalition cost equal to
the  value of the Held-Karp~\cite{HeKa62a} relaxation of the underlying
TSP, multiplied by a constant factor. It is worth briefly considering
some details of the background of this approach, and the geometric
intuitions.

The value of the Held-Karp relaxation of a TSP instance corresponds to a
fairly tight lower bound on the length of an optimal tour.  That value
is a lower bound for the TSP in the usual sense -- i.e. it is less than
or equal to the length of an optimal tour. By multiplying this value by
a small factor, specifically $1.5$, one can obtain an upper bound. The
approach discussed here allocates costs to locations so that the sum of
allocated costs equates with that upper bound.
The allocation gives $\frac{1}{2}$-core values provided the distance
matrix is symmetric and satisfies the triangle inequality. 
Formally, where $\moat_i$ is the cost of location $i$, the moat-packing
solution satisfies $\sum_{i\in N} \moat_i \geq c(N)$ and $\forall S
\subseteq N: \sum_{i\in S} \moat_i \leq (1 + \epsilon) c(S) $. It is is
known that $\epsilon \leq \frac{1}{2}$ and conjectured that $\epsilon
\leq \frac{1}{3}$. In this work we induce a fractional allocation,
written $\fracmoat_i$, by normalizing as we have done for other proxies. 
The solution to the Held-Karp relaxation, and therefore the moat-packing
allocation, has an interesting geometric interpretation which we briefly
discuss (see
\citeauthor{combinatorialoptimisation}~\citeyear{combinatorialoptimisation} 
for a longer exposition). A graphic providing
concrete examples of the required concepts is in
Figure~\ref{fig:examplesofmoats}. Our discussion distinguishes the
concept of a {\em point}, a geometric point given by its coordinates,
and a {\em location}, which is a point that corresponds to a customer in
the underlying TSP. The proposed allocation is calculated by surrounding
locations using a set of geometrically nested reagions called {\em
moats}. For example, in Figure~\ref{fig:examplesofmoats} we have 6
locations, each of which has its own green moat. In our graphic the
locations $5$ and $6$ have their own green moats that describe an
interior region which is then surrounded by an outer orange moat. A moat
is defined by an {\em interior region}, containing the set of locations
we are surrounding with the moat, and a {\em surrounding contour}. The
interior region occurs in the space encapsulated by the moat. That moat
is  the region between the boundary of the interior region and the
surrounding contour. The smallest distance between a point in the
interior region and one on the surrounding contour is greater than or
equal to zero. The minimum such distance gives the width of the moat.
Finally, there can be no locations in a moat. As is usual in our
setting, we need only consider moats comprising the set of  points whose
minimal straight-line distance to a point in the interior region is less
than or equal to the moat width. 
For example, taking the interior region for a single location to consist
only of its single point, the moat is the region between that point and
a circle contour of constant radius. The radius of that circle is the
width of the moat. Concretely, the green disks in
Figure~\ref{fig:examplesofmoats} depict circular moats around individual
locations.
To obtain a cost allocation, moats are arranged so that for a vehicle to
visit the set of locations in the underlying TSP, that vehicle must
traverse the width of each moat at least twice. Choosing moats in order
to to maximise the sum of their widths, the distance traversing all
chosen moats twice corresponds to the value of the Held-Karp lower
bound. One  obtains an $\epsilon$-core value by allocating each moat
width twice to locations outside the moat, and then scaling those
allocations, here by the constant factor $1.5$, to ensure the sum of
allocated costs exceeds the length of an optimal tour.

A compilation of the above ideas is expressed mathematically below in
the constraints and optimisation criterion in
Equation~\ref{eq:moatpacking}. Formally, the moat width,  $w_S$, for a
set of locations  $S\subseteq N$ is calculated by solving the LP in
Equation~\ref{eq:moatpacking}. Below, taking the TSP as given by a
weighted fully connected graph, we use the notation $\delta(S)$ for the
set of edges joining locations in $S$ to locations in $N\setminus S$.

\begin{equation} \label{eq:moatpacking} \begin{array}{c} \max \big( 2
\sum_{S\subseteq N \;\; S \not\equiv \emptyset} w_S\big) \\ s.t.\\
\begin{array}{ll} w_S \geq 0 & \forall S \subseteq N \;\; S \not\equiv
\emptyset\\ \sum_{ij\in\delta(S)} w_S \leq d_{ij} & \forall i,j \in N
\end{array} \end{array} \end{equation}

\noindent The dual of this LP corresponds to the well-known Held-Karp
relaxation of the TSP, which can be solved in polynomial-time.

Once a small set of non-zero $w_S$ terms are computed as per
Equation~\ref{eq:moatpacking}, a \emph{nested} packing is obtained by
following the post-processing procedure described
by~\citeauthor{Ozener:2013}~\citeyear{Ozener:2013}. A packing is
\emph{nested} if and only if $\forall S',S''$ s.t. $w_{S'} > 0$ and
$w_{S''} > 0$, if $S' \cap S'' \not\equiv \emptyset$ then either $S'
\subseteq S''$ or $S'' \subseteq S'$. For any optimal solution to
Equation~\ref{eq:moatpacking} there is a corresponding nested packing
with the same objective value~\cite{Cornuejols:1985:TSP}. The nested
constraint is required and, intuitively, it prevents overcharging a
subset of locations that coalesce in a moat -- i.e. prevents the
allocation from violating the universally quantified constraint in the
definition of the core.
For the nesting critera to be violated there must be three distinct
non-empty sets of locations $S$, $S'$ and $S''$, so that $w_{S \cup S'}
> 0$ and $w_{S' \cup S''}>0$. Post-processing iteratively identifies and
eliminates such cases. Identification is straightforward. For each
elimination we take the assignment $\tau\leftarrow \min\{w_{S \cup S'},
w_{S' \cup S''}\}$, and make the following assignment updates to the
moat widths: $w_S\leftarrow w_S + \tau$, $w_{S''}\leftarrow w_{S''} +
\tau$, $w_{S \cup S'}\leftarrow w_{S \cup S'} - \tau$, and $w_{S' \cup
S''}\leftarrow w_{S' \cup S''} - \tau$.
This iterative procedure terminates yielding a nested packing, however
the algorithm can take exponential time in the worst case. That being
said, in all our experiments we found that nesting takes only a fraction
of a second. Finally, an $\epsilon$-core allocation is obtained where,
for each $S\subseteq N$ we distribute the cost $3\times w_S$ arbitrarily
to the locations in the set $(N\backslash 0)\backslash S$ -- we
distribute the term evenly to all nodes outside that moat for $S$,
excluding the depot node $0$.

\subsection{Hybrid Proxy}

Early on in our experimentation, we made an important observation that
lead us to develop a sixth ``blended'' proxy, $\fracblend$. This proxy
is a linear combination of $\fracmoat$ and $\fracdist$. We
experimentally identify a $\lambda \in [0,1]$ for which $\lambda \times
\fracmoat + (1-\lambda) \times \fracdist$ provides an improved proxy for
$\fracshapley$ compared to either component proxies in isolation.

Our observation is that the $\fracmoat$ does not properly distribute the
depot allocation of moat widths to other locations. In order to stay
within the $\nicefrac{1}{2}$-core allocation, that width is distributed
in equal parts to all locations. Blending the $\fracmoat$ with
$\fracdist$ mitigates this problem, and as we observe, increases proxy
accuracy relative to $\fracshapley$. The value of the improvement seems
to decrease gradually as the size of games increases.
Figure~\ref{fig:hyb} plots the benefit of blending proxies at different
values of $\lambda$ in our corpus of Synthetic games and the in a corpus
of Real-World transport scenarios. A detailed description of the
Real-World scenarios is given later in Section~\ref{sec:realworld}.
Experimentally we found $\lambda = 0.6$ to be most effective in
Synthetic games.  A clear signal for the optimal value of $\lambda$ 
in Real-World games is not obvious, however there is clear support in our 
data for blending the moat
and depot distances proxies.  The graphs in Figure~\ref{fig:hyb} show the 
average and worst case error in cost allocation to a particular
location.  We also measured the root mean squared error (RMSE)
over all locations.  The RMSE did not provide a clear signal to support 
a particular blending parameter, though it did remain
clear that blending performed better than either proxy in isolation
for both bot Synthetic and Real-World games.

%
\begin{figure}[h] \centering \begin{minipage}[b]{0.49\linewidth}
\centering \includegraphics[width=\columnwidth,page=1]{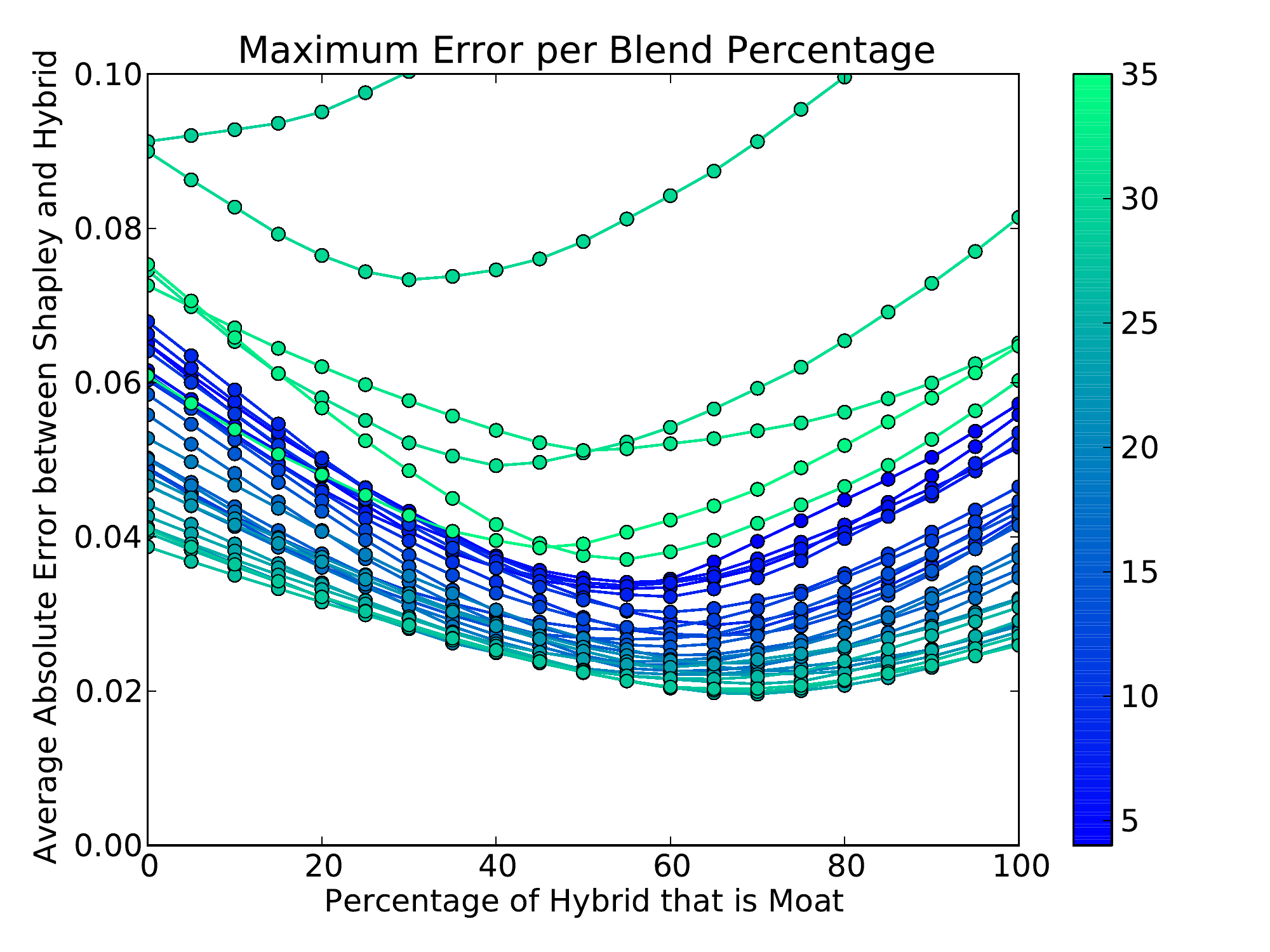} \end{minipage}
\begin{minipage}[b]{0.49\linewidth} \centering
\includegraphics[width=\columnwidth, page=3]{./Blend_Synth}
\end{minipage} \begin{minipage}[b]{0.49\linewidth} \centering
\includegraphics[width=\columnwidth, page=1]{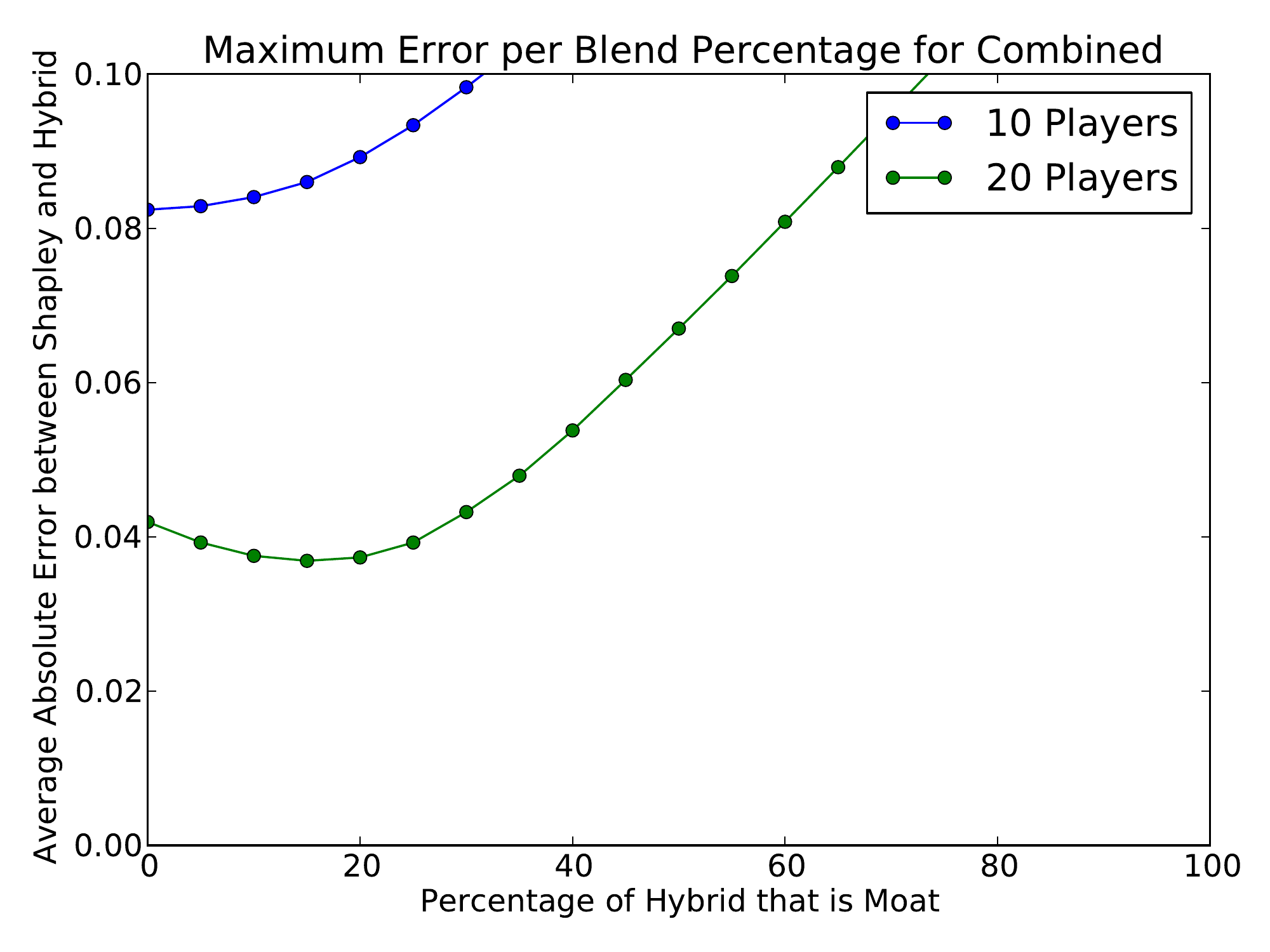}
\end{minipage} \begin{minipage}[b]{0.49\linewidth} \centering
\includegraphics[width=\columnwidth, page=3]{./Emp_Combined}
\end{minipage}

\caption{Effect of the blending parameter $\lambda$ on the error of
Shapley allocation prediction for all of the Synthetic datasets (top)
and all of the Real-World scenarios (bottom). The left-hand graph shows
the average worst case error measured at any single location, while the
right-hand graph shows the average error over all locations.}
\label{fig:hyb} 
\end{figure}

\clearpage \section{Analysis of Na\"ive Proxies}

We refer to the three proxies $\fracdist$, $\fracshort$ and
$\fracreroute$, as being \emph{na\"ive}. Contrastingly, we call
$\fracchris$, $\fracmoat$ and $\fracblend$ the  \emph{sophisticated}
proxies. The formulation of the na\"ive proxies $\fracdist$ and
$\fracshort$ make them amenable to direct analysis of their worst case
performance. We consider settings where the na\"ive proxies $\fracdist$
and $\fracshort$ can perform quite badly.

In order to illustrate this, consider a TSG where the depot is at one
corner of a square of dimension $a$ with one location at each of the
other 3 corners. Locations nearest the depot are indexed $1$ and $3$,
and the third location indexed $2$. \begin{center} \scalebox{1}{
\begin{tikzpicture}[auto]

\tikzstyle{ann} = [draw=none,fill=none,right] \tikzstyle{small}=[draw,
circle, fill=white!100,minimum size=5pt, inner sep=2pt]
\tikzstyle{depot}=[draw, square, fill=white!100] \tikzstyle{lab}=[]
\tikzstyle{collection}=[draw,circle,minimum size=11pt, inner sep=2pt]
\tikzstyle{dots}=[]

   \node[mark size=4pt,fill=white!100] (d) at (-1,-1)
   {\pgfuseplotmark{square*}}; \node[ann] at (-2.3,-1) {Depot};

    \node[small] (n1) at (-1,1) {}; \node[ann] at (-2.9,1) {Location 1};
    \node[small] (n3) at (1,-1) {}; \node[ann] at (1.1,-1) {Location 3};
    \node[small] (n2) at (1,1){}; \node[ann] at (1.1,1) {Location 2};   
   %
   \node[ann] () at (-.2,1.2) {$a$}; \node[ann] () at (-.2,-1.2) {$a$};
   \node[ann] () at (-1.4,.1) {$a$}; \node[ann] () at (1.05,0.1) {$a$};

%
%
%
%
        \draw[-,thick] (d) to (n1); \draw[-,thick] (n1) to (n2);
        \draw[-,thick] (n2) to (n3); \draw[-,thick] (n3) to (d);

    \end{tikzpicture}
    }
    \end{center}

\noindent Our na\"ive proxies yield the following allocations: $$
\begin{array}{c|ccc} i & \fracshapley & \fracdist & \fracshort \\ \hline
1,3 &  0.299a & 0.293a & 0.333a \\ 2   &  0.402a & 0.415a & 0.333a \\
\end{array} $$ \noindent Observe $\fracdist$ performs well in this case
(maximum of $\approx 11\%$ error) while $\fracshort$ does not (minimum
of $\approx 16\%$ error).

We now identify some pathological cases on which the $\fracshort$ and
$\fracdist$ proxies perform poorly. Our first result demonstrates that
$\fracdist$ and $\fracshort$ may under-estimate the true Shapley value
badly.

\begin{theorem} There exists an $n$ location TSP problem on which, for
some location $i$, the ratio $\nicefrac{\fracdist_i}{\fracshapley_i}$
goes to 0 as $n$ goes to $\infty$.  For the same problem the ratio
$\nicefrac{\fracshort_i}{\fracshapley_i}$ goes to $0$ as $n$ goes to
$\infty$ for $\Theta(n)$ of the locations. \end{theorem}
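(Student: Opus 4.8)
\medskip
\noindent\textbf{Proof plan.} The plan is to exhibit a single family of Euclidean instances, parametrised by $n$, containing one ``outlier'' location that is cheap under $\fracdist$ but expensive under $\fracshapley$, together with $\Theta(n)$ ``cluster'' locations that are invisible to $\fracshort$ but carry nonnegligible Shapley value. Put the depot at the origin, one location $i^{*}$ at $(-1,0)$, and the remaining $n-1$ locations $c_1,\dots,c_{n-1}$ at distinct points inside a ball of radius $\varepsilon=\varepsilon(n)$ about $(1,0)$, where $\varepsilon\to 0$ polynomially, say $\varepsilon=n^{-2}$. All distances are Euclidean, so the instance is symmetric and satisfies the triangle inequality, and all the normalising denominators below are bounded away from $0$.

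First I would establish two-sided, coalition-uniform estimates on the characteristic function. Since $i^{*}$ lies at distance $1$ from the depot on the side opposite the cluster, and the cluster lies near distance $1$ on the other side, elementary planar-geometry arguments give, for every $S\subseteq\{c_1,\dots,c_{n-1}\}$: $c(\{i^{*}\})=2$; $c(S)=2\pm O(n\varepsilon)$ for $S\neq\emptyset$ (one excursion to the cluster); and $c(S\cup\{i^{*}\})=c(S)+2\pm O(\varepsilon)$, because the cheapest way to insert $i^{*}$ into any tour is as an out-and-back near the depot (its two incident edges cost at least $1+(2-O(\varepsilon))$ while the chord they replace is worth only $1\pm O(\varepsilon)$). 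In particular the marginal contribution of $i^{*}$ to \emph{every} coalition equals $2\pm O(\varepsilon)$, so $\shapley_{i^{*}}=2\pm O(\varepsilon)=\Theta(1)$, while $c(N)=4\pm O(n\varepsilon)=\Theta(1)$, whence $\fracshapley_{i^{*}}=\shapley_{i^{*}}/c(N)=\Theta(1)$. On the other hand $d_{i^{*}0}=1$ while every $c_j$ has $d_{c_j 0}=1\pm O(\varepsilon)$, so $\sum_j d_{j0}=\Theta(n)$ and $\fracdist_{i^{*}}=1/\Theta(n)=\Theta(1/n)$. Therefore $\fracdist_{i^{*}}/\fracshapley_{i^{*}}=\Theta(1/n)\to 0$, which is the first claim.

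For the second claim I would analyse the $n-1$ cluster locations. In any optimal tour the cluster is traversed as one contiguous block (otherwise one pays an extra $\approx 2$), so each $c_j$ is adjacent either to two other cluster points, or to one cluster point and to the depot, or to one cluster point and to $i^{*}$; in all three cases its two tour-neighbours are collinear with it up to $O(\varepsilon)$, so $\short_{c_j}=O(\varepsilon)$, whereas $\short_{i^{*}}=\Theta(1)$ is a genuine detour. Hence $\sum_j\short_j=\Theta(1)$ and $\fracshort_{c_j}=O(\varepsilon)$. For the denominator I would lower-bound $\shapley_{c_j}$ by splitting the Shapley sum according to whether $S$ already contains a cluster point: if it does, the marginal of $c_j$ is $O(\varepsilon)$; if not, i.e.\ $S\in\{\emptyset,\{i^{*}\}\}$, the marginal is $2\pm O(\varepsilon)$ (a fresh excursion to the cluster), and these two coalitions carry Shapley weight $\frac1n$ and $\frac1{n(n-1)}$, which sum to $\frac1{n-1}$. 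Thus $\shapley_{c_j}=\frac{2}{n-1}\pm O(\varepsilon)=\Theta(1/n)$, so $\fracshapley_{c_j}=\Theta(1/n)>0$, and $\fracshort_{c_j}/\fracshapley_{c_j}=O(n\varepsilon)=O(1/n)\to 0$ simultaneously for all $n-1=\Theta(n)$ cluster locations. (As a consistency check, efficiency gives $\shapley_{i^{*}}+\sum_j\shapley_{c_j}=2+(n-1)\frac{2}{n-1}=4=c(N)$.)

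The routine but load-bearing step is the first one: proving that for \emph{every} subset the optimal tour decomposes into an $\approx 2$-cost excursion to $i^{*}$ near the depot plus an $\approx 2$-cost excursion to the cluster, with all error terms uniformly $O(\varepsilon)$ independently of $S$ and of $n$. This is a short argument (any tour has two edges at $i^{*}$, and its cheaper configuration has $i^{*}$ adjacent to the depot and one cluster point, costing $1+(2\pm O(\varepsilon))$ and saving a chord worth $1\pm O(\varepsilon)$; entering and leaving the cluster costs $2\pm O(\varepsilon)$), but it is the part that needs care, since the Shapley computations above invoke those bounds for exponentially many coalitions at once. Given the uniform bounds, the two ratio estimates are immediate, and any $\varepsilon=o(1/n)$ (e.g.\ $n^{-2}$) completes the proof.
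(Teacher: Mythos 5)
Your proposal is correct and takes essentially the same route as the paper: a lone location at distance $\approx 1$ on one side of the depot and a cluster of $n-1$ (nearly) coincident locations at distance $\approx 1$ on the other side, so that $\fracdist$ grossly under-charges the lone location while $\fracshort$ gives (almost) nothing to the $\Theta(n)$ cluster locations despite each having Shapley share $\Theta(1/n)$. The only difference is that the paper places the $n-1$ locations exactly on top of one another, making every quantity exact ($\fracshort_i=0$, $\fracshapley_i=\tfrac{1}{2(n-1)}$), whereas you perturb the cluster by $\varepsilon=n^{-2}$ and carry uniform $O(n\varepsilon)$ error terms — a harmless elaboration that also keeps all pairwise distances strictly positive.
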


\begin{proof} Suppose the first $n-1$ locations are at distance $a$ from
the depot, whilst the $n$th location is located at a distance $a$ in the
opposite direction from the depot.

\begin{center} \scalebox{1}{ \begin{tikzpicture}[auto]

\tikzstyle{ann} = [draw=none,fill=none,right] \tikzstyle{large}=[draw,
circle, fill=white!100,minimum size=10pt, inner sep=5pt]
\tikzstyle{small}=[draw, circle, fill=white!100,minimum size=5pt, inner
sep=2pt] \tikzstyle{depot}=[draw, square, fill=white!100]
size=5pt, inner sep=1pt \tikzstyle{lab}=[]
\tikzstyle{collection}=[draw,circle,minimum size=11pt, inner sep=2pt]
\tikzstyle{dots}=[]

    \node[large] (many) at (-4,0) {}; \node[ann] () at (-6,-0.5)
    {Locations $1, \ldots, n-1$};

    \node[ann] () at (-2,0.2) {$a$};

        \node[small] (alone) at (4,0) {}; \node[ann] () at (3,-0.5)
        {Location $n$};

        \node[ann] () at (2,0.2) {$a$};

        \node[mark size=4pt,fill=white!100] at (0,0)
        {\pgfuseplotmark{square*}}; \node[ann] () at (-.50,-0.5)
        {Depot};

        \draw[-,thick] (many) to (alone);

    \end{tikzpicture}
    }
    \end{center}
%
Note that the normalization constant for $\fracshapley$, $\sum_{j \in n}
\shapley_j = 4a$. Now $\fracshapley_n = \nicefrac{2a}{4a} =
\nicefrac{1}{2}$ since the cost of adding the $n$th location to any
coalition is $2a$.  Leaving, for $i < n$, $$\fracshapley_i =
\frac{\nicefrac{2a}{(n-1)}}{4a} = \frac{1}{2(n-1)}.$$ On the other hand,
the normalization constant for $\fracdist$, $\sum_{i=1}^n d_{i0} = na$
since all locations are equidistant from the depot. Giving, for all $i
\leq n$, $\fracdist_i = \frac{1}{n}.$

Thus for $i < n$, $$\frac{\fracdist_i}{\fracshapley_i} =
\frac{\nicefrac{1}{n}}{\nicefrac{1}{2(n-1)}} = \frac{2n-1}{n}$$ which
goes to $2$ as $n \rightarrow \infty$. While
$$\frac{\fracdist_n}{\fracshapley_n} =
\frac{\nicefrac{1}{n}}{\nicefrac{1}{2}} = \frac{1}{2n}$$ which goes to
$0$ as $n \rightarrow \infty$.

Note that the shortcut proxy, $\fracshort$ performs poorly on this
example. For $i<n$, $\fracshort_i=0$ since all the locations are
co-located, leaving $\fracshort_n = 1$.  For $i<n$ we have
$\fracshapley_i=\nicefrac{1}{2}(n-1).$ Thus, for $i<n$,
$$\frac{\fracshort_i}{\fracshapley_i} = \frac{0}{\nicefrac{1}{2(n-1)}} =
0$$ and $$\frac{\fracshort_n}{\fracshapley_n} =
\frac{1}{\nicefrac{1}{2}} = 2$$ \end{proof}

Our second result demonstrates that $\fracdist$ can also over-estimate
the true Shapley value badly.

\begin{theorem} There exists an $n$ location TSG where the ratio
$\nicefrac{\fracshapley_i}{\fracdist_i}$ goes to 0 as $n$ goes to
$\infty$ for $\Theta(n)$ of the locations. \end{theorem}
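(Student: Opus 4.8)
The plan is to build a Euclidean TSG in which $n-1$ of the $n$ locations are crammed into a vanishingly small neighbourhood of a point very close to the depot, while the one remaining location sits far away on the opposite side. The point of this configuration is that $\fracdist$ charges each of the $n-1$ crowded locations its full (small) depot distance, whereas the Shapley value charges only \emph{one} of them --- whichever appears first among the crowd in a random permutation --- for the cost of the short detour that reaches the crowd, and therefore splits that cost roughly $(n-1)$ ways. Letting the crowd's depot distance shrink like $1/n$ makes this $\Theta(n)$ discrepancy the dominant effect, so that $\nicefrac{\fracshapley_i}{\fracdist_i}\to 0$ for every one of the $n-1$ crowded locations.

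Concretely, I would place the depot at the origin, a ``remote'' location $n$ at distance $1$ on the positive $x$-axis, and locations $1,\dots,n-1$ inside a ball of radius $\epsilon_n$ around the point at distance $r_n=1/n$ on the negative $x$-axis, where $\epsilon_n$ decays faster than any fixed power of $1/n$ (e.g.\ $\epsilon_n=2^{-n}$). The easy half is $\fracdist$: here $\sum_{j} d_{j0}=(n-1)r_n+1+O(n\epsilon_n)=\frac{2n-1}{n}+O(n\epsilon_n)$, so $\fracdist_i=\frac{1}{2n-1}(1+o(1))$ for every $i\le n-1$. I would also establish $c(N)=2+2r_n+O(n\epsilon_n)$: an optimal tour is (up to within-crowd slack) an out-and-back to the remote location concatenated with an out-and-back to the crowd, and more generally, for any coalition $S$ the triangle inequality lets me replace the crowded members of $S$ by a single point at the crowd's centre at the cost of an $O(|S|\epsilon_n)$ additive error, reducing everything to a trivial two-point instance.

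The heart of the argument is the Shapley value of a crowded location $i\le n-1$. Expanding Equation~\ref{eq:shapleyexactmargins}, I would split the sum over $S\subseteq N\setminus\{i\}$ according to whether $S$ contains another crowded location. If it does, then monotonicity of $c$ plus an ``insert $i$ right next to its crowded neighbour'' argument bounds the marginal contribution $c(S\cup\{i\})-c(S)$ by $2\epsilon_n$. If it does not, then $S\subseteq\{n\}$, and because the remote location is on the opposite side (so $d_{i,n}\approx 1+r_n$) a direct check shows the marginal contribution equals $2r_n$ in both remaining cases $S=\emptyset$ and $S=\{n\}$; the two corresponding Shapley coefficients sum to $\frac1n+\frac{1}{n(n-1)}=\frac{1}{n-1}$. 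Hence $\shapley_i=\frac{2r_n}{n-1}+O(\epsilon_n)=\frac{2}{n(n-1)}+O(\epsilon_n)$, so $\fracshapley_i=\frac{2r_n/(n-1)}{2+2r_n}(1+o(1))=\frac{1}{n^2-1}(1+o(1))$. Dividing, $\nicefrac{\fracshapley_i}{\fracdist_i}=\frac{2n-1}{n^2-1}(1+o(1))\to 0$, and since this bound is the same for all $n-1=\Theta(n)$ crowded locations, the theorem follows.

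The step I expect to be the main obstacle is making ``the crowd behaves like a single point'' precise uniformly over all $2^{n-1}$ coalitions: I need that an optimal tour of any $S$ visits the crowded members of $S$ consecutively up to an $O(|S|\epsilon_n)$ error, and that adding or removing one crowded member perturbs the optimal tour length by only $O(\epsilon_n)$. Both facts follow from monotonicity of the characteristic function and the triangle inequality, but they have to be stated with enough care that the total accumulated slack, $O(n\epsilon_n)$, is genuinely negligible against the $\Theta(1/n^2)$-size quantities $\shapley_i$ and $c(N)$ --- which is exactly why $\epsilon_n$ is taken to decay super-polynomially rather than, say, as $1/n^2$.
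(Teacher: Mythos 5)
Your proposal is correct and is essentially the paper's own construction: $n-1$ clustered locations close to the depot plus one remote location on the opposite side, so that the Shapley value splits the cluster's round-trip cost $(n-1)$ ways while $\fracdist$ charges each cluster member its full depot distance, driving $\nicefrac{\fracshapley_i}{\fracdist_i}$ to $0$ at rate $\Theta(1/n)$ for all $\Theta(n)$ cluster members. The paper obtains the same asymptotics more simply by co-locating the $n-1$ locations exactly at distance $a$ and placing the remote location at distance $(n+1)a$ (scaling the remote distance up rather than the cluster distance down), which makes every characteristic-function value exact and avoids the uniform $O(\epsilon_n)$ coalition-by-coalition bookkeeping you flag as the main obstacle.
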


\begin{proof} Suppose the first $n-1$ locations are at distance $a$ from
the depot, whilst the $n$th location is located at a distance $(n+1)a$
from the depot in the opposite direction.

 \begin{center} \scalebox{1}{ \begin{tikzpicture}[auto]

\tikzstyle{ann} = [draw=none,fill=none,right] \tikzstyle{large}=[draw,
circle, fill=white!100,minimum size=10pt, inner sep=5pt]
\tikzstyle{small}=[draw, circle, fill=white!100,minimum size=5pt, inner
sep=2pt] \tikzstyle{depot}=[draw, square, fill=white!100]
size=5pt, inner sep=1pt \tikzstyle{lab}=[]
\tikzstyle{collection}=[draw,circle,minimum size=11pt, inner sep=2pt]
\tikzstyle{dots}=[]

    \node[large] (many) at (-3,0) {}; \node[ann] () at (-5,-0.5)
    {Locations $1, \ldots, n-1$};

    \node[ann] () at (-2,0.2) {$a$};

        \node[small] (alone) at (7,0) {}; \node[ann] () at (6,-0.5)
        {Location $n$};

        \node[ann] () at (3.5,0.2) {$a(n+1)$};

        \node[mark size=4pt,fill=white!100] at (0,0)
        {\pgfuseplotmark{square*}}; \node[ann] () at (-0.5,-0.5)
        {Depot};

        \draw[-,thick] (many) to (alone);

    \end{tikzpicture}
    }
    \end{center}
%
%
Note that the normalization constant for $\fracshapley$, $\sum_{j \in n}
\shapley_j = 2a+2a(n+1) = 2a(n+2)$. The Shapley value $\shapley_i$ for
any $i<n$ is $\frac{2a}{n-1}$, thus $$\fracshapley_i =
\frac{\nicefrac{2a}{n-1}}{2a(n+2)} = \frac{1}{(n-1)(n+2)}.$$ While the
fractional Shapley allocation for location $n$ is $$\fracshapley_n =
\frac{2a(n+1)}{2a(n+2)} = \frac{1}{2}.$$

The normalization constant for $\fracdist$ is $\sum_{i=1}^n d_{i0} =
a(n-1) + a(n+1) = 2an$. For location $n$ the assignment from the
distance based proxy is $$\fracdist_n = \frac{a(n+1)}{2an} =
\frac{n+1}{2n}.$$ For $i < n$, $$\fracdist_i = \frac{a}{2an} =
\frac{1}{2n}.$$

Thus, for location $n$ we have $$\frac{\fracshapley_n}{\fracdist_n} =
\frac{\nicefrac{1}{2}}{\nicefrac{n+1}{2n}} = \frac{2n}{2n+1}$$ which
goes to $1$ as $n$ goes to $\infty$.

For $i < n$ we have $$\frac{\fracshapley_i}{\fracdist_i} =
\frac{\nicefrac{1}{(n-1)(n+2)}}{\frac{1}{2n}} = \frac{2n}{(n-1)(n+2)}$$
which goes to $0$ as $n$ goes to $\infty$.

For the $\fracshort$ we again have $i < n$, $\fracshort_i = 0$ leaving
$\fracshort_n = 1$. Thus, $\nicefrac{\fracshapley_n}{\fracshort_n} =
\nicefrac{1}{2}$ while for $i < n$,
$\nicefrac{\fracshapley_i}{\fracshort_i}$ is undefined. \end{proof}

Our third result demonstrates that $\fracshort$ may under-estimate the
Shapley value badly even on very simple examples which may be embedded
in larger problems.

\begin{theorem} There exists a $2$ location TSG instance for which
$\nicefrac{\fracshort}{\fracshapley} = 0$ for one of the two locations.
\end{theorem}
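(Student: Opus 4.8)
The plan is to construct the simplest possible degenerate configuration: place both locations at essentially the same point, so that the optimal tour visits them consecutively with a zero-length ``detour'' for the second one. Concretely, take two locations $1$ and $2$ at distance $a$ from the depot $0$, co-located with one another (i.e.\ $d_{12}=0$, or more safely $d_{12}=\varepsilon$ and then let $\varepsilon\to 0$, though with $d_{12}=0$ permitted the argument is exact). The optimal tour is $[0,1,2,0]$ (or $[0,2,1,0]$) of length $2a$. I would then compute both $\fracshort$ and $\fracshapley$ directly.

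For $\fracshort$: the shortcut savings of skipping location $1$ from the tour $[0,1,2,0]$ is $d_{01}+d_{12}-d_{02}=a+0-a=0$, so $\short_1=0$; symmetrically, skipping location $2$ gives $d_{12}+d_{20}-d_{10}=0+a-a=0$, so in the chosen tour orientation one of the two gets $\short=0$ while the terminal one collects all the savings. (With the tour $[0,1,2,0]$: $\short_1 = d_{01}+d_{12}-d_{02}=0$ and $\short_2 = d_{12}+d_{20}-d_{10}=0$ as well — actually both are zero here since $d_{01}=d_{02}=a$, so I should be a little careful and note that when both shortcut values are zero the normalization $\sum_j \short_j = 0$ makes $\fracshort$ ill-defined; to get the clean ``ratio $=0$'' statement I instead make the two distances slightly unequal, say location $1$ at distance $a$ and location $2$ at distance $a$ but with $d_{12}=0$ forcing one shortcut to vanish — or, cleaner, keep them co-located but at distance $a$, so the tour is $[0,1,2,0]$ with $d_{12}=0$, giving $\short_2 = 0$ and $\short_1 = d_{01}+d_{12}-d_{02} = a+0-a$; the degeneracy is that $\{1,2\}$ is a single point.) The robust version: put location $1$ at distance $a$ and location $2$ at distance $a+\delta$ on the ray through $1$, so $d_{12}=\delta$. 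Then for tour $[0,1,2,0]$: $\short_1 = a + \delta - (a+\delta) = 0$ and $\short_2 = \delta + (a+\delta) - a = 2\delta$, so $\fracshort_1 = 0$ and $\fracshort_2 = 1$.

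For $\fracshapley$: with $N=\{1,2\}$, $c(\{1\})=2a$, $c(\{2\})=2(a+\delta)$, $c(\{1,2\})=2(a+\delta)$. Then $\shapley_1 = \tfrac12\big((c(\{1\})-c(\emptyset)) + (c(\{1,2\})-c(\{2\}))\big) = \tfrac12(2a + 0) = a > 0$, and $\shapley_2 = \tfrac12(2(a+\delta) + 2\delta) = a+2\delta$. Hence $\fracshapley_1 = a/(2a+2\delta) > 0$. Therefore $\nicefrac{\fracshort_1}{\fracshapley_1} = 0$ while the Shapley share of location $1$ is bounded away from zero, which is the claim. I expect the only real subtlety is the bookkeeping around the degenerate case where $\sum_j \short_j = 0$; introducing the $\delta$ offset (or just remarking that the two-location instance can be taken with $d_{12}=\delta>0$ arbitrarily small and the ratio is identically $0$ for every such $\delta$) disposes of it cleanly, and the closing remark that this gadget embeds into larger TSPs follows because adding far-away locations does not alter the local shortcut values of $1$ and $2$ when the optimal tour still visits them consecutively.
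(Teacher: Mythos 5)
Your construction is correct and is essentially the paper's own argument: the paper places the second location a distance $a$ farther down the same road (i.e.\ your construction with $\delta=a$), and likewise concludes $\fracshort_1=0$ while $\shapley_1=\tfrac{1}{2}\,2a=a>0$, so the ratio is $0$. Your extra care about the co-located degenerate case (where $\sum_j\short_j=0$ and $d_{12}$ would not be strictly positive) is a reasonable refinement but does not change the substance of the proof.
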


\begin{proof} Suppose the first location is located a distance $a$ from
the depot with the second location located a distance of $a$ farther
down the road.

 \begin{center} \scalebox{1}{ \begin{tikzpicture}[auto]

\tikzstyle{ann} = [draw=none,fill=none,right] \tikzstyle{large}=[draw,
circle, fill=white!100,minimum size=10pt, inner sep=5pt]
\tikzstyle{small}=[draw, circle, fill=white!100,minimum size=5pt, inner
sep=2pt] \tikzstyle{depot}=[draw, square, fill=white!100]
size=5pt, inner sep=1pt \tikzstyle{lab}=[]
\tikzstyle{collection}=[draw,circle,minimum size=11pt, inner sep=2pt]
\tikzstyle{dots}=[]

    \node[small] (first) at (0,0) {}; \node[ann] () at (-.8,-0.5)
    {Location 1};

    \node[ann] () at (2,0.3) {$a$};

    \node[small] (second) at (4,0) {}; \node[ann] () at (3.2,-0.5)
    {Location 2};

      \node[ann] () at (-2,0.3) {$a$};

        \node[mark size=4pt,fill=white!100] (d) at (-4,0)
        {\pgfuseplotmark{square*}}; \node[ann] () at (-4.5,-0.5)
        {Depot};

        \draw[-,thick] (d) to (first); \draw[-,thick] (first) to
        (second);

    \end{tikzpicture}
    }
    \end{center}

For the first location we have $\fracshort_1 = 0$, as removing it has no
effect on the distance we must travel to the second location.  This
leaves $\fracshort_2 = 1$. The Shapley value for the first location is
$$\shapley = \frac{2a}{2} + \frac{0}{2} = a.$$ Which gives $\fracshapley
= \nicefrac{a}{4}$ and thus $$\frac{\fracshort}{\fracshapley} =
\frac{0}{\nicefrac{a}{4}} = 0.$$

\end{proof}

Our fourth and final result demonstrates that $\fracshort$ may
over-estimate the Shapley value badly.

\begin{theorem} There exists a four location TSG for which
$\nicefrac{\fracshapley}{\fracshort} = 0$ for two of the four cities.
\end{theorem}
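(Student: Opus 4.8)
The plan is to exhibit a single explicit four-location TSG. I start with a structural remark that pins down what the example must look like. Since $\fracshapley_i = \shapley_i / \sum_j \shapley_j$ and $\sum_j \shapley_j = c(N) > 0$, the equality $\fracshapley_i = 0$ is the same as $\shapley_i = 0$. Under the triangle inequality every marginal $c(S\cup\{i\}) - c(S)$ is nonnegative and $c(\{i\}) > 0$ for any $i$ that is not the depot, forcing $\shapley_i > 0$; and if $i$ is at the depot then $\short_i = 0$ in any optimal tour, so the ratio is undefined rather than $0$. Hence the example cannot satisfy the triangle inequality, and I would use a general distance matrix in which some location acts as a \emph{bridge} with a negative marginal contribution to a coalition. (This is consistent with the paper: the theorem is about TSGs, not Euclidean TSGs, and the hardness proof already uses a non-metric construction.)

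Concretely, I would use a depot $0$, two ``hub'' locations $h_1, h_2$ joined by a very large direct distance $d_{h_1 h_2} = M$, and two ``bridge'' locations $b_1, b_2$; I set all four bridge--hub distances to a common value $r$, both depot--bridge distances to a common small value $q$, both depot--hub distances to a common value $p$, and $d_{b_1 b_2} = s$ small. The resulting matrix has the automorphism swapping $b_1\leftrightarrow b_2$ and $h_1\leftrightarrow h_2$, so $\shapley_{b_1} = \shapley_{b_2}$ and it is enough to make one of these zero. I would then choose the five parameters so that: (i) an optimal tour is $[0, h_1, b_1, b_2, h_2, 0]$, with both bridges in interior positions; and (ii) among the eight marginals of $b_1$ (one per $S \subseteq \{b_2, h_1, h_2\}$), all are positive except the one on $S = \{h_1, h_2\}$, which is strongly negative because inserting $b_1$ lets the tour avoid the length-$M$ edge, and the Shapley-weighted sum of the eight is exactly $0$. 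Given (i) and (ii), both halves of the claim follow: $\shapley_{b_1} = \shapley_{b_2} = 0$ gives $\fracshapley_{b_1} = \fracshapley_{b_2} = 0$, while in that tour $\short_{b_1} = d_{h_1 b_1} + d_{b_1 b_2} - d_{h_1 b_2} = r + s - r = s$ and similarly $\short_{b_2} = s$ -- the uniform choice of the bridge--hub distance is exactly what produces these cancellations -- so $\fracshort_{b_1}, \fracshort_{b_2} > 0$ and $\fracshapley/\fracshort = 0$ for both bridges. A concrete parameter set that works is $p = 5$, $q = 3$, $r = 10$, $s = 1$, $M = 60$: one checks that the optimal tour has length $31$, that $c(\{h_1, h_2\}) = 70$ while $c(\{b_1, h_1, h_2\}) = 30$, that the eight marginals of $b_1$ are $6, 1, 8, 8, 1, 1, -40, 1$, and that $\tfrac14(6) + \tfrac1{12}(1 + 8 + 8) + \tfrac1{12}(1 + 1 - 40) + \tfrac14(1) = 0$.

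The one real obstacle is confirming that (i) and (ii) are jointly satisfiable. Writing the Shapley sum symbolically turns (ii)'s ``$= 0$'' into a single linear equation fixing $M$ in terms of $p, q, r, s$ (for the matrix above it reads $M = 8q + 6s + 4r - 2p$), whereas the optimality of $[0, h_1, b_1, b_2, h_2, 0]$ and the negativity of the $\{h_1, h_2\}$-marginal each merely require $M$ to be large enough relative to $p, q, r, s$; one must check that these conditions are compatible and leave all distances positive, which a finite inspection of the twelve tours on four locations settles. The remainder is the routine evaluation of the characteristic-function values feeding $\shapley_{b_1}$, plus the observation that both of the (two, up to reversal) optimal orderings assign each bridge the same shortcut $s$, so the choice of optimal tour does not matter.
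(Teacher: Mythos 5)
Your proposal is correct, and I verified the arithmetic: with $p=5$, $q=3$, $r=10$, $s=1$, $M=60$ the characteristic values are $c(\{b_1\})=6$, $c(\{b_1,b_2\})=7$, $c(\{b_1,h_j\})=18$, $c(\{h_1,h_2\})=70$, $c(\{b_1,b_2,h_j\})=19$, $c(\{b_1,h_1,h_2\})=30$, $c(N)=31$, giving exactly the marginals you list and $\shapley_{b_1}=\shapley_{b_2}=\tfrac14(6)+\tfrac1{12}(17)+\tfrac1{12}(-38)+\tfrac14(1)=0$, while both optimal orderings give each bridge shortcut $s>0$, so the ratio is exactly $0$ for two locations. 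However, your route is genuinely different from the paper's. The paper stays in an (essentially Euclidean) configuration: locations $1,2$ within $\epsilon$ of the depot and locations $3,4$ at distance $ka$, so that every shortcut is $\epsilon$ (hence $\fracshort_i=\nicefrac14$ for all $i$) while $\fracshapley_1,\fracshapley_2 = O(\epsilon)/(2ka+O(\epsilon))$; it then lets $k\to\infty$, so it really proves only that the ratio can be made arbitrarily small, in the same asymptotic spirit as the preceding theorems, not that it is literally zero. You instead take the statement at face value, observe (correctly) that under the triangle inequality every non-depot location has $\shapley_i\geq \tfrac{2d_{i0}}{|N|}>0$ so an exact zero forces a non-metric instance, and then engineer a negative marginal ($S=\{h_1,h_2\}$, where the bridge lets the tour avoid the length-$M$ edge) that cancels the positive ones exactly via the linear condition $M=8q+6s+4r-2p$. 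What each buys: the paper's example is simpler, metric, and matches the Euclidean setting of its experiments, at the cost of only establishing a limiting statement; yours proves the literal ``$=0$'' claim with a single finite instance, at the cost of leaving the metric setting---and your triangle-inequality remark shows that this cost is unavoidable, which is a useful clarification of what the theorem can and cannot mean.
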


\begin{proof} Consider a four location TSG where locations $1$ and $2$
are $\epsilon$ from each other and the depot while cities $3$ and $4$
are at a distance $a$ from the depot and $\epsilon$ from each other.

 \begin{center} \scalebox{1}{ \begin{tikzpicture}[auto]

\tikzstyle{ann} = [draw=none,fill=none,right] \tikzstyle{large}=[draw,
circle, fill=white!100,minimum size=10pt, inner sep=5pt]
\tikzstyle{small}=[draw, circle, fill=white!100,minimum size=5pt, inner
sep=2pt] \tikzstyle{depot}=[draw, square, fill=white!100]
size=5pt, inner sep=1pt \tikzstyle{lab}=[]
\tikzstyle{collection}=[draw,circle,minimum size=11pt, inner sep=2pt]
\tikzstyle{dots}=[]

    \node[small] (first) at (-3,-.5) {}; \node[ann] () at (-3.2,1)
    {Location 1};

    \node[small] (second) at (-3,.5) {}; \node[ann] () at (-3.2,-1)
    {Location 2};

    \node[small] (fourth) at (5,.5) {}; \node[ann] () at (5.3,0.5)
    {Location 4};

    \node[small] (third) at (5,-.5) {}; \node[ann] () at (5.3,-0.5)
    {Location 3};

    \node[ann] () at (5.05,0) {$\epsilon$}; \node[ann] () at (-2.95,0)
    {$\epsilon$}; \node[ann] () at (-3.8,0.5) {$\epsilon$}; \node[ann]
    () at (-3.8,-0.5) {$\epsilon$};

    \node[ann] () at (1,0.8) {$ka$}; \node[ann] () at (1,-0.8) {$ka$};

     \node[mark size=4pt,fill=white!100] (d) at (-4,0)
     {\pgfuseplotmark{square*}}; \node[ann] () at (-5.5,0) {Depot};

     \draw[-,thick] (d) to (first); \draw[-,thick] (first) to (second);
     \draw[-,thick] (d) to (first); \draw[-,thick] (d) to (second);
     \draw[-,thick] (first) to (third); \draw[-,thick] (second) to
     (fourth); \draw[-,thick] (third) to (fourth);

    \end{tikzpicture}
    }
    \end{center}

We note that here $\epsilon << ka$, as such we will hide $\epsilon$
terms in $O(\epsilon)$.  The marginal cost saved by skipping any
location is $\epsilon$, this means that all locations have the same
allocation according to $\fracshort$, namely for all $i \in \{1, \dots,
4\}$, $\fracshort_i = \nicefrac{1}{4}$.

Note that the normalization constant for $\fracshapley$, $\sum_{j \in n}
\shapley_j = 2ka + O(\epsilon)$. To compute the Shapley values for
locations $1$ and $2$ we observe that, in any given permutation, each
location adds a multiple of $\epsilon$, thus by symmetry, for $i \in
\{3,4\}$, $$\fracshapley_i = \frac{O(\epsilon)}{2ka + O(\epsilon)}$$ To
compute the Shapley value for locations $3$ and $4$ we observe that, no
matter where in the permutation they appear, the first contributes $2ka$
while the other contributes only $\epsilon$.  Consequently, by symmetry,
for locations $i \in \{3,4\}$, $$\fracshapley_i =
\frac{\frac{2ka+O(\epsilon)}{2}}{2ka+O(\epsilon)} = \frac{1}{2}.$$

Thus, locations $i \in \{1,2\}$, we have
$$\frac{\fracshapley}{\fracshort} =
\frac{\frac{O(\epsilon)}{2ka+O(\epsilon)}}{\nicefrac{1}{4}} =
\frac{4O(\epsilon)}{2ka+O(\epsilon)}.$$ The term goes to $0$ as $k$ goes
to $\infty$.

\end{proof}

\section{Empirical Study}

We implemented each of the six proxies discussed, along with a version
of  \emph{ApproShapley} that uses \emph{Concorde}~\cite{Concorde} to
evaluate the characteristic function of the TSG. The {\em Concorde}
program is used to find optimal solutions to  TSPs. Rather than
calculating $\fracshapley$ by direct enumeration  as a baseline to
compare proxies, we estimate that value using \emph{ApproShapley} with
\emph{Concorde}. For the size of games we have considered, we find
that  $4000$ iterations of \emph{ApproShapley} to be sufficient to
obtain accurate  baseline values.

We experimented using a corpus of games comprised of two sets of TSGs.
The first set of games are Synthetic. For each $i\in[4,\ldots,35]$, we
generate $20$ instances of the Euclidean TSG with $i$ locations
occurring uniformly at random in a square  of dimension $1,000$. The
horizontal and vertical coordinates of the locations are represented
using  32-bit floating point numbers. Those Euclidean games are
available online at
\url{http://users.cecs.anu.edu.au/~charlesg/tsg_euclidean_games.tar.gz}.

 \label{sec:realworld} The second set of games is taken from large
 Real-World VRPs in the cities of Auckland, New Zealand; Canberra,
 Australia; and Sydney, Australia. Heuristic solutions to those VRPs are
 calculated using the~{\em Indigo} solver~\cite{kilby2011}. That is a
 flexible heuristic which implements an Adaptive Large Neighbourhood
 Search, the basic structure of which is described in detail by Ropke
 and Pisinger in~\cite{Ropk06Adaptive}.
 \footnote{{\em Indigo} is a strong vehicle routing solution platform,
recently computing $5$ new best solutions for $1,000$ customer problems from the 
VRPTW benchmark library. The solutions computed using Indigo were certified by Dr.~Geir Hasle,
Chief Research Scientist at SINTEF and maintainer of the 
VRPTW benchmark library, as the best currently known on September 24th of 2013. 
\url{http://www.sintef.no/Projectweb/TOP/VRPTW/Homberger-benchmark/1000-customers}.}
To give an indication of the scale and difficulty of these VRPs, the
Auckland model comprises $1,166$ locations to be served using a fleet of
at most $25$ vehicles over a $7$ day period. In the heuristic solution
we collect tours of length $10$ and $20$ to created TSGs for testing.
Because Real-World distance matrices are asymmetric, in all cases
asymmetry is negligible, we induce symmetric  problems by resolving for
the greater of $d_{ij}$ and~$d_{ji}$ -- i.e. setting $d_{ij} = d_{ji} =
\max\{d_{ij}, d_{ji}\}$. It total we obtain 69 Real-World games of size
10 and 44 games of size 20.~\footnote{Due to commercial agreements with
our industrial partners we cannot release these Real-World games.}

All experiments reported here were performed on a computer with an Intel
i7-2720QM CPU running at 2.20GHz, with 8GB of RAM, and running the {\em
Ubuntu 12.04.3 LTS} operating system. For Synthetic problems with $35$
locations, 4000 iterations of \emph{ApproShapley} with exact TSP
evaluations using \emph{Concorde}~\cite{Concorde} takes $545$ seconds.
Computing $\fracchris$, which replaces the exact TSP computation with an
evaluation of the Christofides heuristic, results in a reduction to
$11.39$ seconds in total. Computing $\fracmoat$ takes under 1 second.
All the na\"ive proxies, namely $\fracdist$, $\fracshort$, and
$\fracreroute$, take fractions of a second to compute.

Our experimental analysis assumes the reader is familiar with a number
of statistical measures which we summaries in
Appendix~\ref{appendix:stats}. To evaluate how well  proxies perform in
approximating $\fracshapley$ we measure the \emph{point-wise
root-mean-squared error} (RMSE) in each game. We also use Kendall's
$\tau$ \shortcite{kendall1938new} (written KT) to compare the ranking---i.e. least
expensive to most expensive---of locations induced by the  Shapley
allocation and our proxies. The value $\tau$ measures the amount of
disagreement between two rankings. It is customary to report $\tau$ as a
normalized value (correlation coefficient) between 1 and -1, where 
$\tau = 1$ means that two lists are perfectly correlated (equal) and
$\tau = -1$ means that two lists are perfectly  anti-correlated (they
are equal if one list is reversed). Our analysis makes use of the
significance, or $p$-value of a computed $\tau$. The $p$-value is
computed using a two-tailed $t$-test where the null hypothesis is that
there is no correlation between orderings ($\tau = 0$). Taking our
significance threshold to be the customary $0.05$, we can reject the
null hypothesis when $p \leq 0.05$. When $p \geq 0.05$ we fail to reject
the null hypothesis, a $p$-value $ \leq 0.05$ is a statistically
significant result. This means it is unlikely that two random,
uncorrelated lists would show such a high degree of correlation.

\subsection{Synthetic Data}

Figure \ref{fig:snakes} shows the average root mean squared error and
average KT distance for each proxy from $\fracshapley$ for all game
sizes of the Synthetic data. A complete set of tables and results from
the Synthetic Data can be found in Appendix~\ref{appendix:synthetic}. We
describe highlights of our results here. Overall, the best performing
proxy is $\fracblend$, both in terms of lowest RMSE and  highest average
$\tau$. The $\fracshort$ and $\fracreroute$ proxies are by far the
worst, particularly in terms of approximating Shapley value, but also in
terms of the ranking induced by the corresponding allocations. The
computationally more expensive proxy $\fracreroute$ always dominates
$\fracshort$; a trend which continues throughout our testing on
Real-World data as well. The proxy $\fracdist$ performs poorly at
ranking, however does surprisingly well at approximation being almost
competitive with the more sophisticated proxies. In ranking locations,
$\fracreroute$ regularly identifies the location ranked most costly
according to the Shapley value, outperforming all proxies on this task
for the synthetic data. More generally, in $\geq 60\%$ of synthetic
games the $\fracchris$, $\fracmoat$, $\fracreroute$,  and $\fracblend$
proxies each correctly identifies the most costly location.

In the majority of the synthetic games, our analysis of rankings using
Kendall's $\tau$ strongly implies that  $\fracchris$, $\fracmoat$ and
$\fracblend$ rankings are correlated with $\fracshapley$. Put simply, we
are confident that sophisticated proxies are inducing a ranking that is
similar to the one induced by the Shapley value. They also reliably
identify the most expensive location.
Among the pure proxies, the $\fracchris$ proxy outperforms all the
others at ranking by a slim margin. For example, it is able to identify
the most expensive location according to the Shapley value $66.4\%$ of
the time. Additionally, regardless of the number of locations, the mean
value for $\tau$ between $\fracshapley$ and $\fracchris$ is $\geq 0.55$,
and in \emph{every} instance with $18$ or more locations (and for the
majority of instances between 4 and 17 locations) there is a
statistically significant result for $\tau$.  Comparatively,
$\fracblend$ returns similar (and often higher) results for $\tau$ while
achieving a statistically significant correlation with the ranking
induced by $\fracshapley$ for every synthetic game instance with more
than $8$ players, save $6$. The $\tau$ analysis in the case of
$\fracmoat$ is less positive, gives strong correlation in instances with
more than $20$ locations, though still better than any of the naive
proxies.

\begin{figure} \begin{minipage}[b]{\linewidth} \centering
\includegraphics[height=9.2cm, page=8]{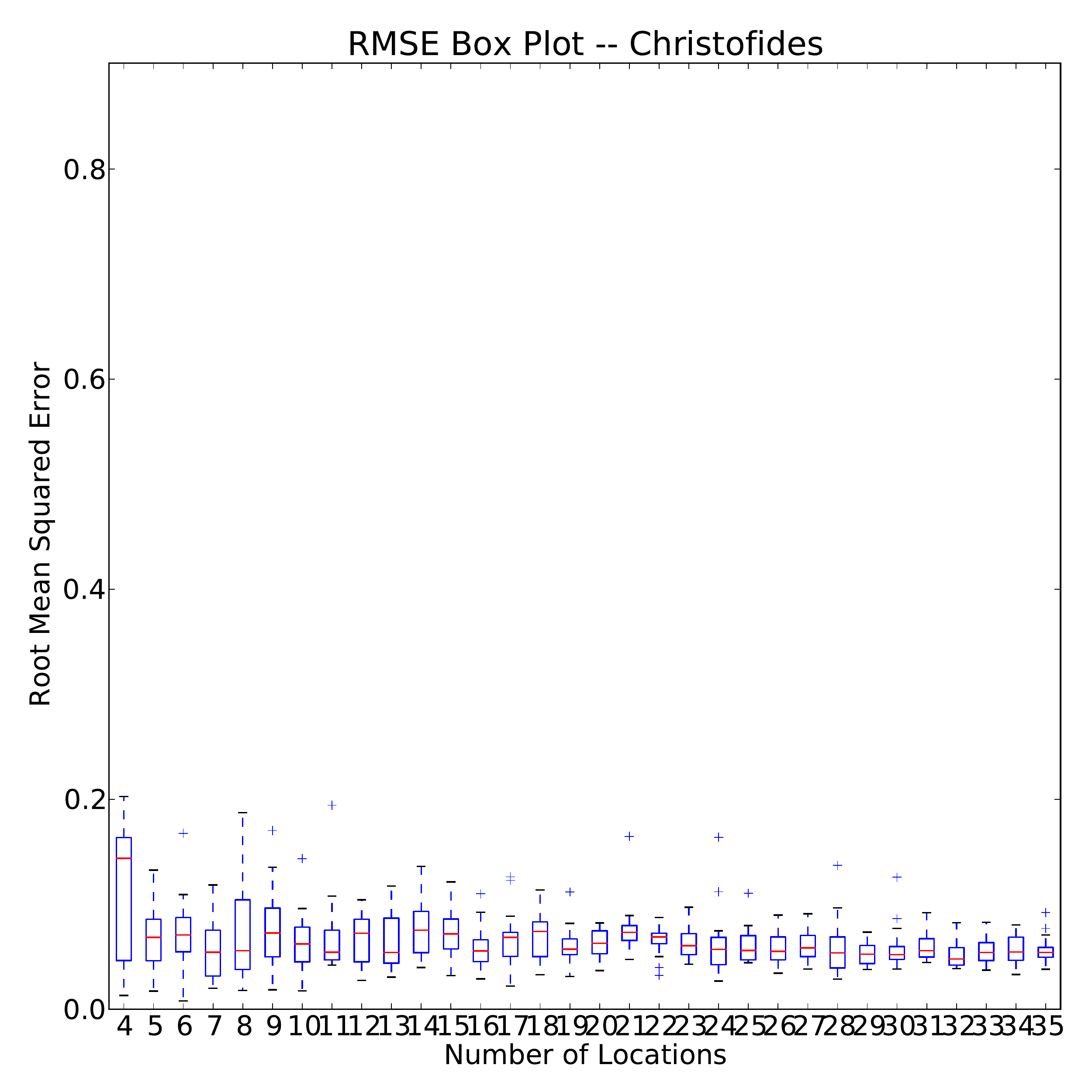}
\end{minipage} \begin{minipage}[b]{\linewidth} \centering
\includegraphics[height=9.2cm, page=16]{./Synth_AggError}
\end{minipage} \caption{Performance of the five pure proxies and one
hybrid proxy according to: (\textbf{left}) RMSE averaged over the 20
games generated for each number of locations, and ({\bf right})
Kendall's tau rank correlation averaged over the 20 games generated for
each number of locations. The error bands correspond to plus or minus
one standard deviation.  The horizontal axis of our Kendall's tau plot
has been inverted for ease of comparison -- i.e. more correlated lists
are towards the bottom of the graph ($1.0$).} \label{fig:snakes}
\end{figure}

Our experimental analysis also considered how the types of allocation
error differ between proxies. For example, we considered questions, such
as: Do the proxies make a lot of small errors for low cost locations, or
do they  make large errors for locations that are apportioned large
costs? Knowledge about the type and severity of errors made by our
different proxies provides some guidance to the situations where we
should have confidence in proxy allocations and/or the induced rankings.

Figure~\ref{fig:synthetic_scatter_error} shows the absolute error
between each of the proxies and $\fracshapley$ graphed as a function of
the allocation according to $\fracshapley$.
For all the proxies, there appears to be a strong linear component to
the error --- many of the proxies allocate proportionally more (or less)
cost compared to the $\fracshapley$ allocation. In some cases
$\fracreroute$ allocates more than 20-times the cost allocation by
$\fracshapley$, though typically this happens in the case of locations
that received less than 10\% of the Shapley allocation. We find that
better performing proxies make more constant real-valued errors across
all locations,  regardless of actual allocation. The scatterplots for
$\fracblend$ and $\fracchris$ both show the weakest linear bias, with
$\fracblend$ showing a somewhat sub-linear bias. For example,
$\fracchris$ and $\fracmoat$ can allocate 6-times $\fracshapley$, though
 this only occurs in the case of locations whose Shapley allocation is
less that 5\% of the tour cost. Measuring the factor by which it
overestimates allocations, the $\fracdist$ proxy appears to perform
rather well, allocating at most 2.5-times the fair cost. The caveat is
that $\fracdist$ is indiscriminate, also making proportionately large
over-allocation errors to locations which are costly according to
$\fracshapley$.

\begin{figure} \begin{minipage}[b]{0.49\linewidth} \centering
\includegraphics[height=6.7cm, page=66]{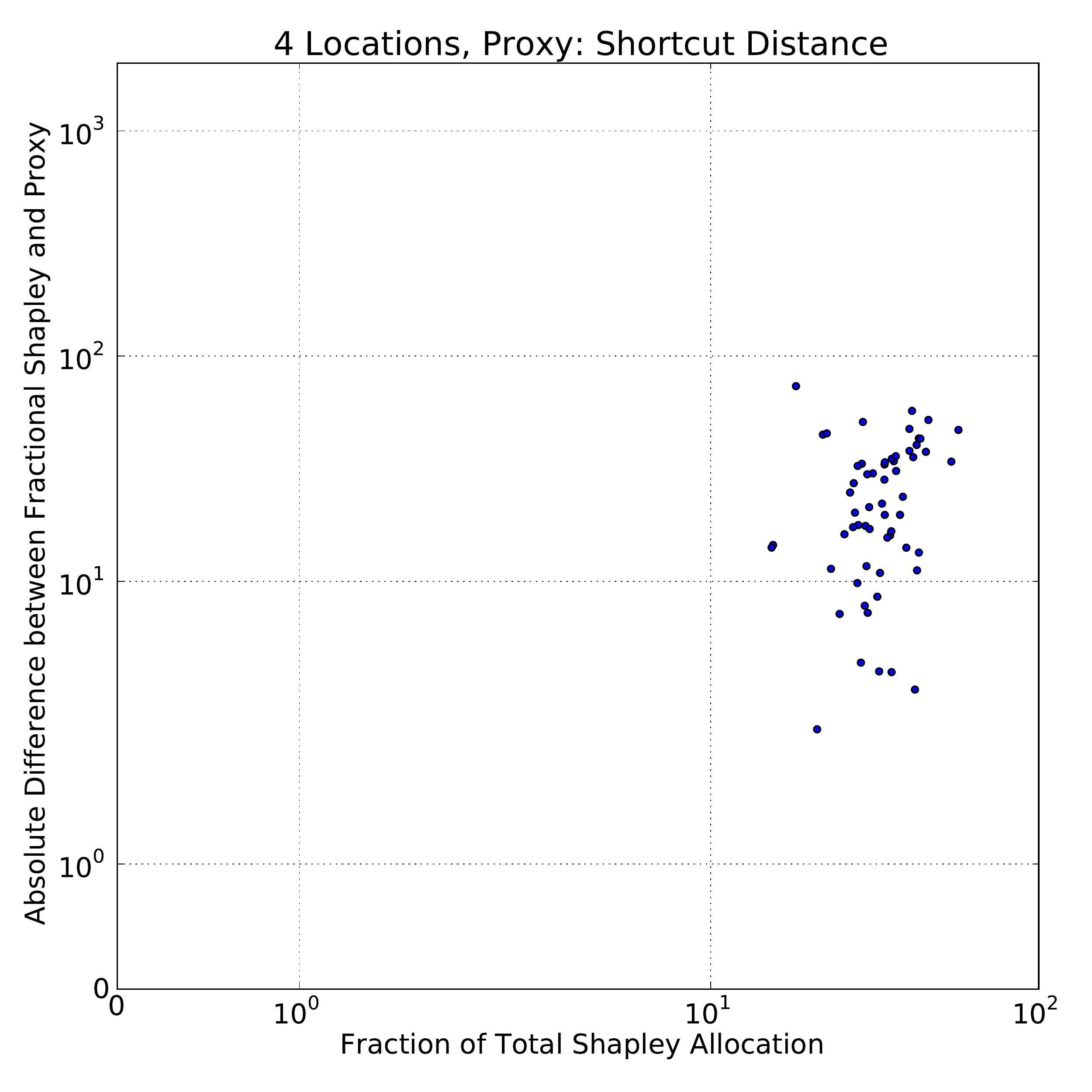}
\includegraphics[height=6.7cm,
page=165]{./Synth_Scatter_Error}
\includegraphics[height=6.7cm, page=33]{./Synth_Scatter_Error}
\end{minipage} \begin{minipage}[b]{0.49\linewidth} \centering
\includegraphics[height=6.7cm,
page=132]{./Synth_Scatter_Error}
\includegraphics[height=6.7cm, page=99]{./Synth_Scatter_Error}
\includegraphics[height=6.7cm,
page=198]{./Synth_Scatter_Error} \end{minipage}
\caption{Absolute value of the difference between the $\fracshapley$ and
$\fracproxy$ plotted as a function of $\fracshapley$ for all the points
in the Synthetic data for all game sizes.  Note that these are log-log
plots to highlight the spread of the data.}
\label{fig:synthetic_scatter_error} \end{figure}

\subsection{Real-World Data}

Measuring the performance of proxies in Real-World data from Auckland,
Canberra, and Sydney,  overall we find the quality of allocation is
slighly degraded compared to measurements we made in synthetic games. We
identified no significant performance differences between cities. A
complete set of tables and results for each of the cities can be found
in Appendices~\ref{appendix:auckland} through \ref{appendix:sydney}; we
report on the combined statistics of these games in this section.
Summary statistics for these games are shown in 
Tables~\ref{tab:combined_rmse} through \ref{tab:combined_percent}.

\begin{table}[H] \centering \begin{tabular}{lcccccc}\toprule
&\multicolumn{2}{c}{10 Locations} & \multicolumn{2}{c}{20 Locations} &
\multicolumn{2}{c}{All Games} \\ & RMSE & St. Dev. & RMSE & St.Dev.&
RMSE & St. Dev.  \\ \midrule Shortcut Distance & 0.4429 & 0.1436 &
0.3239 & 0.1064 	& 0.3966 & 0.1291 \\ Re-routed Margin  & 0.4160 &
0.1328 & 0.2902 & 0.0934 	& 0.3670 & 0.1175 \\ Depot Distance    &
0.1346 & 0.0616 & 0.0870 & 0.0303 	& 0.1160 & 0.0494 \\ \midrule Moat
Packing      & 0.2478 & 0.1247 & 0.1969 & 0.0883 	& 0.2280 & 0.1105
\\ Christofides      & 0.1338 & 0.0694 & 0.0863 & 0.0311 	& 0.1153 &
0.0545 \\ 60/40 Moat/Depot  & 0.1442 & 0.0697 & 0.0765 & 0.0301 & 0.1178
& 0.0542 \\
\bottomrule \end{tabular} \caption{Root Mean Squared Error (RMSE) and
Standard Deviation (St. Dev.) for the combined Real-World datasets for
games with 10 and 20 locations. Lower is better.}
\label{tab:combined_rmse} \end{table}

\begin{table}[H] \centering \begin{tabular}{lcccccc} \toprule
&\multicolumn{2}{c}{10 Locations} & \multicolumn{2}{c}{20 Locations} &
\multicolumn{2}{c}{All Games} \\ & $\tau$ & St. Dev. & $\tau$ & St.
Dev.& $\tau$ & St. Dev.  \\ \midrule Shortcut Distance & -0.0135 &
0.2692 & 0.0798 & 0.1891 		& 0.0228 	& 0.2380 \\ Re-routed
Margin  & 0.3578  & 0.2388 & 0.3476 & 0.1993 	& 0.3538 	& 0.2234
\\ Depot Distance    & 0.1062  & 0.2382 & 0.1622 & 0.2313 		&
0.1280 	& 0.2355 \\ \midrule Moat Packing      & 0.3450  & 0.2554 &
0.3064 & 0.1710 		& 0.3300 	& 0.2225 \\ Christofides      &
0.2464  & 0.2770 & 0.3509 & 0.2258 		& 0.2871 	& 0.2571 \\
60/40 Moat/Depot  & 0.2037  & 0.2524 & 0.2531 & 0.2487 	& 0.2229 &
0.2510 \\ \bottomrule
%
\end{tabular} \caption{Average KT distance ($\tau$) and Standard
Deviation (St. Dev.) for the combined Real-World datasets for games with
10 and 20 locations. Higher is better; $+1$ means the two lists are
perfectly correlated and $-1$ means the two lists are perfectly
anti-correlated.} \label{tab:combined_kt} \end{table}

\begin{table}[H] \centering \begin{tabular}{lcccc} \toprule &
\multicolumn{2}{c}{Median} & \multicolumn{2}{c}{Num. Significant} \\ &
10 Locations           & 20 Locations          & 10 Locations         &
20 Locations         \\ \midrule Shortcut Distance & 0.4564 & 0.3349
					& 3 & 4 \\ Re-routed Margin  & 0.1730 & 0.0354
					& 15 & 25 \\ Depot Distance    & 0.4631 & 0.3276
					& 3 & 9 \\ \midrule Moat Packing      & 0.1444 &
0.0531 						& 16 & 21 \\ Christofides      & 0.2109 &
0.0275 						& 10 & 25 \\ 60/40 Moat/Depot  & 0.4042 &
0.1239 					& 9 & 15 \\ \bottomrule
\end{tabular} \caption{Median $p$ (lower is better) and count of the
number of statistically significant instances ($p < 0.05$) out of the 69
instances of 10 location games and 44 instances of 20 location games of
$\tau$ for the combined Real-World datasets.} \label{tab:combined_sig}
\end{table}

\begin{table}[H] \centering \begin{tabular}{lccc} \toprule & 10
Locations     & 20 Locations     & All Games \\ \midrule Shortcut
Distance & 5.8\%  & 15.9\% & 9.7\%  \\ Re-routed Margin  & 42.0\% &
65.9\% & 51.3\% \\ Depot Distance    & 34.8\% & 38.6\% & 36.3\% \\
\midrule Moat Packing      & 42.0\% & 61.4\% & 49.6\% \\ Christofides   
  & 39.1\% & 56.8\% & 46.0\% \\ 60/40 Moat/Depot  & 42.0\% & 54.5\% &
46.9\% \\ \bottomrule
\end{tabular} \caption{Percentage of correct top elements of the Shapley
ordering identified by the respective proxy for the 69 games of size 10
and 44 games of size 20 for the combined Real-World data.}
\label{tab:combined_percent} \end{table}

Examining the change in performance of sophisticated proxies when moving
 from the Synthetic to Real-World scenarios, the average RMSE increases
from $\approx 0.075$ to $\approx 0.153$ while the average $\tau$
decreases from $\approx 0.63$ to $\approx 0.28$. Measuring RMSE, the 
degradation in performance of $\fracmoat$ is clearly the most sever.
Measuring ranking error via $\tau$,  $\fracmoat$ degrades more
gracefully compared to either $\fracchris$ or $\fracblend$.  
Measuring all proxy performances using RMSE,  $\fracshort$ is always
dominated by $\fracreroute$, which in turn is strictly dominated by the
sophisticated  proxies. It is worth noting that in Real-World scenarios
$\fracreroute$ strictly dominates all the other proxies in its ability
to  identify the most costly location. In that regard $\fracmoat$ is a
close second. Treating ranking error, Table~\ref{tab:combined_kt} shows
that $\fracreroute$ actually performs comparably with best sophisticated
proxy, $\fracmoat$, in terms of $\tau$. Table~\ref{tab:combined_sig}
shows that the Christofides proxy $\fracchris$ achieves statistically
significant values for $\tau$ in the largest number of scenarios. The
average ranking performance of $\fracmoat$ is relatively low, which
appears to be somewhat due to the discrepancy in the number of games of
size 10 and 20. We see clearly superior ranking performance from
$\fracchris$ for the larger games.
Repeating our observations for the synthetic corpus, in the Real-World
games the sophisticated  proxies have a greater percentage of
statistically significant  results for $\tau$. For a majority of the
instances, $\fracchris$ and $\fracmoat$ achieve a statistically
significant correlation with the ranking induced by $\fracshapley$.
Table~\ref{tab:synth_combined_compare} shows a side by side comparison
of the games with 20 locations for the Real-World and Synthetic data.
Moving from synthetic to Real-World we see the performance of
$\fracchris$ and $\fracblend$ noticably degrade, though they do continue
to achieving fairly low RMSE scores. Again, it is also worth noting that
all sophisticated proxies are also good and identifying the most costly
location.

Examining Real-World games with 20 locations,
Figures~\ref{fig:real_20_scatter_error} and
\ref{fig:synthetic_20_scatter_error} give the error scatter plots for
all  proxies as a function of allocation  according to $\fracshapley$.
The linear component to the error observed in
Figure~\ref{fig:synthetic_scatter_error} for Synthetic data remains
clear in Real-World scenarios. There is however a more uniform
distribution of errors among locations in the latter. This is evidenced
by the pillar like shapes for most of the plots; demonstrating that in
the Real-World data, many of the $\fracshapley$ allocations cluster
around a uniform allocation of around 5--8\%. Indeed, the observed tight
clustering of actual Shapley values explains the respectable 
performance of $\fracdist$ in the Real-World datasets. The much taller
shapes we see in  Figure~\ref{fig:real_20_scatter_error} compared to
Figure~\ref{fig:synthetic_20_scatter_error}  indicate that proxy errors
are more randomly distributed among Real-World locations, and that in
Real-World scenarios proxies make proportionately larger allocation
errors irrespective of the actual $\fracshapley$ allocation.

\begin{table} \centering \begin{tabular}{lcccc||cccc} \toprule &
\multicolumn{2}{c}{Synthetic} & \multicolumn{2}{c||}{Real-World} &
\multicolumn{2}{c}{Synthetic} & \multicolumn{2}{c}{Real-World}\\ & RMSE
& St. Dev. & RMSE & St.Dev. & $\tau$ & St. Dev. & $\tau$ & St. Dev. \\
\midrule Shortcut Distance                     & 0.2965 & 0.0543 &
0.3239 & 0.1064 & -0.0363 & 0.1358 & 0.0798 & 0.1891 \\ Re-routed Margin
                     & 0.1826 & 0.0442 & 0.2902 & 0.0934 & 0.3813 & 
0.1505 & 0.3476 & 0.1993 \\ Depot Distance                         &
0.0864 & 0.0182 & 0.0870 & 0.0303  & 0.5053 &  0.1464 & 0.1622 & 0.2313 
\\ \midrule Moat Packing                            & 0.0758 & 0.0174 &
0.1969 & 0.0883  & 0.5304 &  0.1180 & 0.3064 & 0.1710  \\ Christofides  
                            & 0.0622 & 0.0136 & 0.0863 & 0.0311 & 0.5965
 & 0.0999 & 0.3509  & 0.2258  \\ 60/40 Moat/Depot                     &
0.0529 & 0.0084  & 0.0765 & 0.0301 & 0.6690  & 0.1105 & 0.2531  & 0.2487
\\ \bottomrule
\end{tabular} \caption{Comparison of performance between Synthetic and
Real-World datasets for games with 20 locations.  There are 20 games in
the Synthetic corpus and 44 in the Real-World corpus. Root Mean Squared
Error (RMSE) and Standard Deviation (St. Dev.) are reported on the left
where lower is better. On the right average KT distance ($\tau$) and
Standard Deviation (St. Dev.) is reported where higher is better; $+1$
means the two lists are perfectly correlated and $-1$ means the two
lists are perfectly anti-correlated.} \label{tab:synth_combined_compare}
\end{table}

\begin{figure} \begin{minipage}[b]{0.49\linewidth} \centering
\includegraphics[height=6.7cm,
page=5]{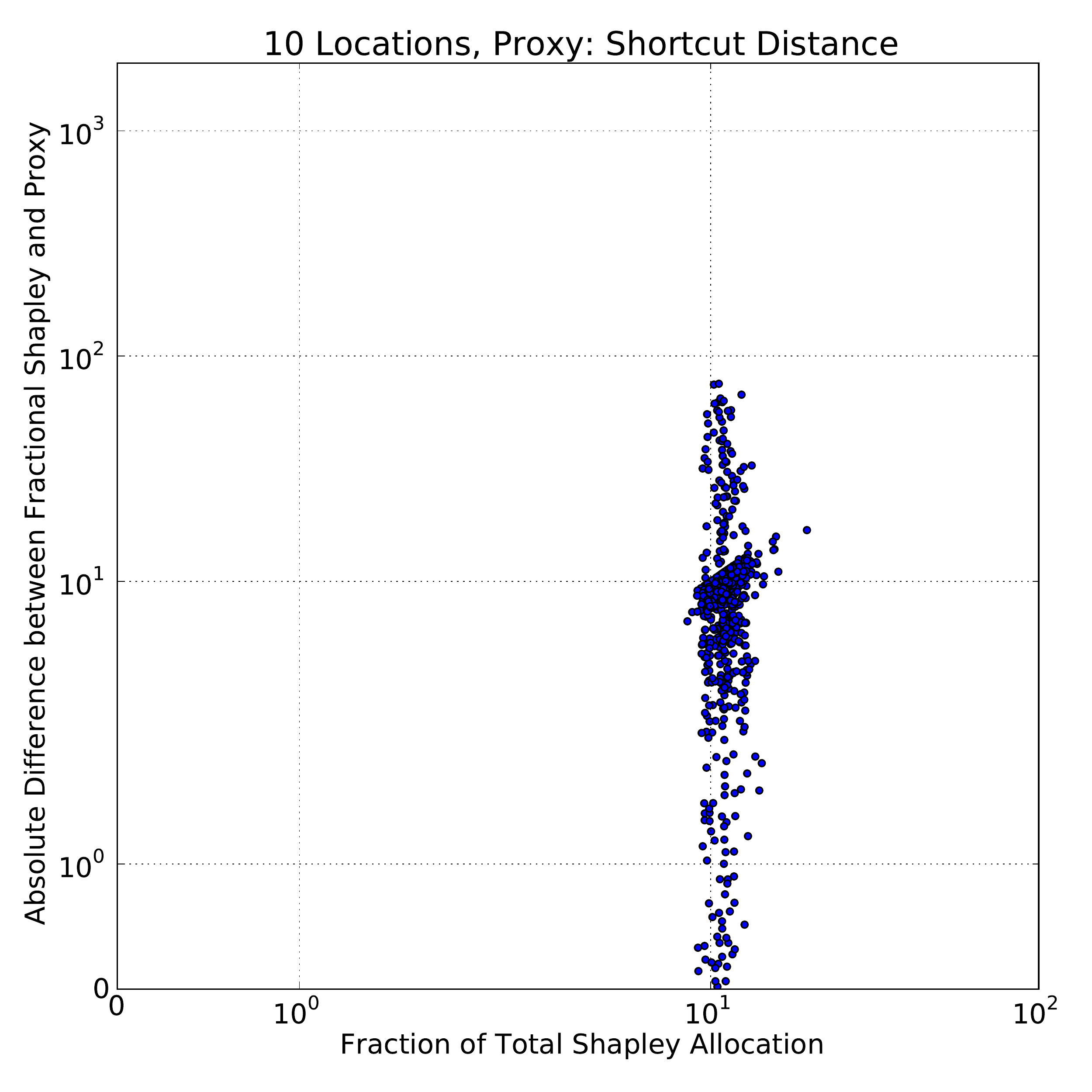}
\includegraphics[height=6.7cm,
page=8]{./Combined_Scatter_Error}
\includegraphics[height=6.7cm,
page=14]{./Combined_Scatter_Error} \end{minipage}
\begin{minipage}[b]{0.49\linewidth} \centering
\includegraphics[height=6.7cm,
page=2]{./Combined_Scatter_Error}
\includegraphics[height=6.7cm,
page=11]{./Combined_Scatter_Error}
\includegraphics[height=6.7cm,
page=17]{./Combined_Scatter_Error} \end{minipage}
\caption{Absolute value of the difference between the $\fracshapley$ and
$\fracproxy$ plotted as a function of $\fracshapley$ for all 44 games in
the Real-World datasets with 20 locations.  Note that these are log-log
plots to highlight the spread of the data.}
\label{fig:real_20_scatter_error} \end{figure}

\begin{figure}
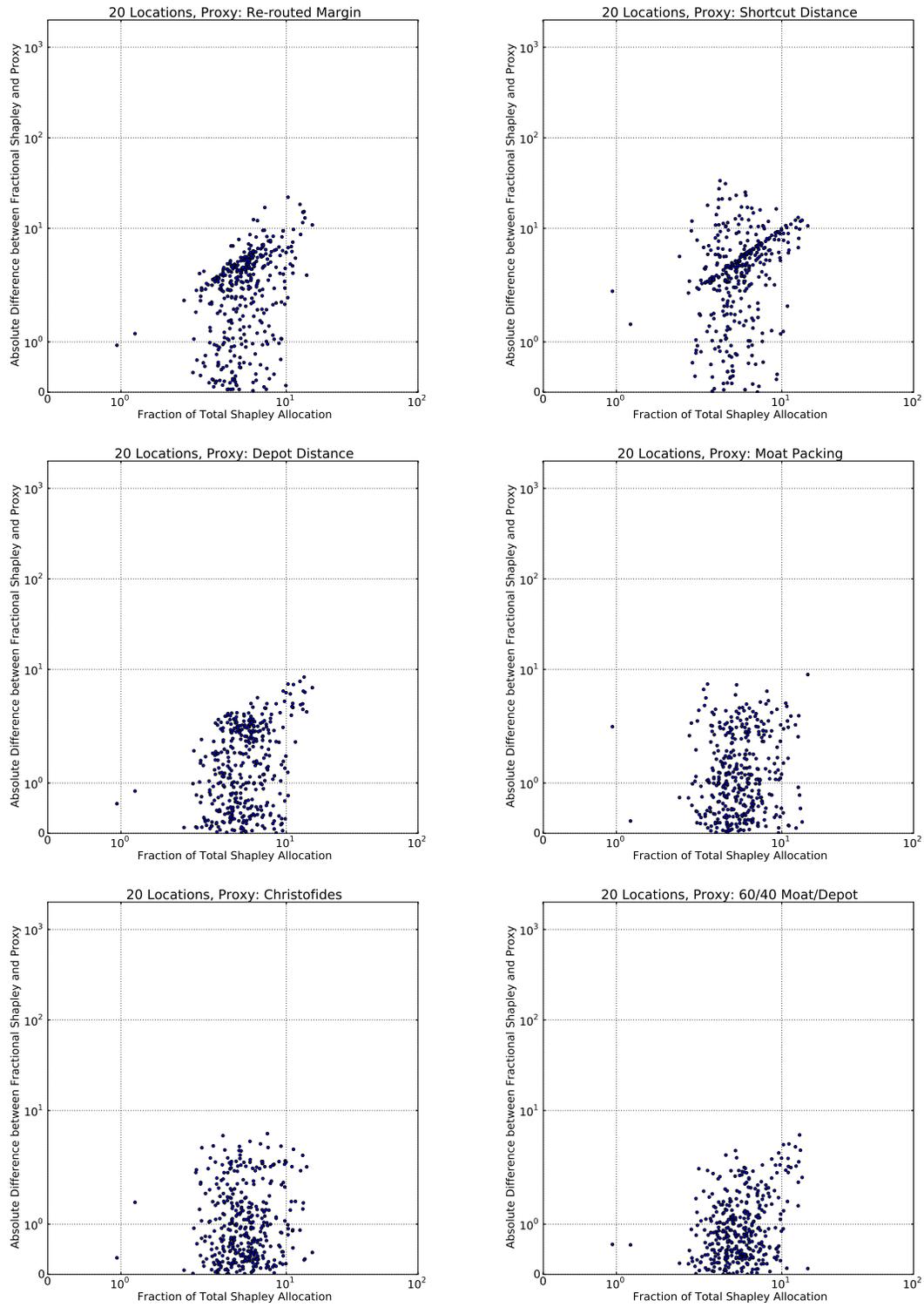
 \begin{minipage}[b]{0.49\linewidth} \centering
\includegraphics[height=6.7cm, page=50]{./Synth_Scatter_Error}
\includegraphics[height=6.7cm, page=83]{./Synth_Scatter_Error}
\includegraphics[height=6.7cm,
page=149]{./Synth_Scatter_Error} \end{minipage}
\begin{minipage}[b]{0.49\linewidth} \centering
\includegraphics[height=6.7cm, page=17]{./Synth_Scatter_Error}
\includegraphics[height=6.7cm,
page=116]{./Synth_Scatter_Error}
\includegraphics[height=6.7cm,
page=182]{./Synth_Scatter_Error} \end{minipage}
\caption{Absolute value of the difference between the $\fracshapley$ and
$\fracproxy$ plotted as a function of $\fracshapley$ for all games in
the Synthetic dataset with 20 locations.  Note that these are log-log
plots to highlight the spread of the data.}
\label{fig:synthetic_20_scatter_error} \end{figure}

\clearpage \section{Related Work}

The theory of cooperative games has a rich history in which various
solution concepts for allocating costs and other quantities have been
proposed~\cite{PeSu07a,Young94}.
In addition to the Shapley value we see allocation concepts given by the
core, the nucleolus and the bargaining set. Of those, the Shapley value
is considered the ``most important'' allocation scheme in cooperative
game theory~\cite{Wint02a}.
Application of the Shapley value spans well beyond transportation
setting.
For examples, the Shapley value has been applied in allocating the cost
of network infrastructure~\cite{Koster,MMP08a}, promoting collaboration
between agents~\cite{ZlRo94} by prescribing an allocation that
incentivises agents to collaborate in the completion of tasks, and as an
incentive compatible way to share departmental costs in corporations
\cite{Young85}. 
Considering applications in networks more broadly, use of the Shapley
value follows a general framework, where agents correspond to the nodes
(or edges) of a graph~\cite{curiel2008,Koster,MMP08a,TiDr86a,AzKe14a}.
Here the definition of the characteristic function depends on the
application domain, with proposed evaluations based on: (i)  the size of
maximum matching, (ii) network flow, (iii) the weight of a minimum
spanning tree, and (iv) the weight of a Hamiltonian
cycle~\cite{curiel2008,DeFa08a}.
Allocation concepts are not solely devised and employed for allocating
costs and other financial quantities. For example, the Shapley value has
been used directly to measure quantities indicating the {\em importance}
of agents in social networks~\cite{MoPa08}, and to measure the {\em
centrality} of nodes in networks \cite{MAS+13a}.
Another solution concept that has been used to gauge the importance of
agents is the \emph{Banzhaf value}~\cite{Banz64a}. The Banzhaf value is
defined for simple voting games -- i.e. cooperative games in which the
value of the coalition is either zero or one but the Banzhaf value of an
agent can suitably be extended to general cooperative games. However,
even within the context simple voting games, the Banzhaf value is more
suitable for measuring the influence of an agent and less suitable for
\emph{allocate} power between agents~\cite{FeMa98a}. Similarly, our
focus is to allocate costs, we focus on the Shapley value.

While solution concepts from the theory of transferable utility (TU)
cooperative games \cite{PeSu07a,CEW11a} have been used for allocations
of costs, the Shapley allocations have rarely received serious attention
in the transportation science literature. The associated computational
cost is prohibitively high for the general case, and consequently strong
notions of fairness  are often taken to be a secondary consideration. 
Though \emph{ApproShapley} is an FPRAS (fully polynomial-time randomized
approximation scheme) for computing the Shapley value if the game is
convex \cite{NowellSWW12}, this does not apply for the domain considered
in this work. Other prominent TU game solution concepts are
\emph{nucleolus} and \emph{core}. TSGs are introduced in
Potters~\cite{potters1992}, where in addition to describing that game,
the authors describe a variety of game known as the {\em routing
game}.\footnote{Note the cited Potters et al. journal publication
extends a technical report introducing the game as early as 1987.} 
you do not include this footnote, the Tamir citation is anachronistic,
we have the space, please leave it! For the latter an auxiliary
constraint forces locations to be visited, in any coalition, in the 
order they are traversed by a specific tour. Assuming that the tour
corresponds to the optimal for the underlying  TSP, then the game has a
non-empty core. \citeauthor{derks:1997}~\citeyear{derks:1997} presented
a quadratic-time  procedure for computing a core allocation of the
routing game. They also characterize suboptimal tours  that specify
routing games with non-empty cores. It should be noted that there are no
known tractable procedures to compute a tour which guarantees the core
is non-empty for the routing game. Conditions for the non-emptiness of
the core in TSGs  were further developed in
Tamir~\shortcite{Tamir198931}. We have already noted that
\citeauthor{faigle1998}~\citeyear{faigle1998} developed a procedure to
calculate a multiplicative $\epsilon$-core allocation for Euclidean
TSGs. \citeauthor{yengin2012}~\citeyear{yengin2012} develop a notion of
a {\em fixed route} game with {\em appointments} which admits a
tractable procedure for computing Shapley values. That model is not
suitable for typical scenarios that involve the delivery of goods to
locations from a depot. TU concepts in TSGs and routing games are
developed for a practical gas delivery application in Engevall et
al.~\shortcite{engevall:1998}.

Considering vehicle routing problems and transportation settings more
generally, G\"{o}the-Lundgren et al.~\shortcite{lundgren1996} develop a
column generation  procedure to calculate the \emph{nucleolus} of a
homogeneous vehicle routing problem -- i.e. all  vehicles are
equivalent. In doing so they develop a procedure to determine if the
core of  that vehicle routing game is empty. Engevall et
al.~\shortcite{Engevall:2004} extend  that work for a very practical
setting of distributing gas using a {\em heterogeneous} fleet of
vehicles.
More recently \"{O}zener et al.~\shortcite{Ozener:2013} examine a number
of solution  concepts---including allocations derived according to the
nested moat-packing of~\citeauthor{faigle1998}~\citeyear{faigle1998}, 
and a highly bespoke approximation of the Shapley allocation---in
deriving cost allocations for  real-world {\em inventory routing}
problems. They show that TU game allocations, especially 
core/duality-based allocations, have significant advantages over the
existing cost  allocations which their industrial client was using.

\section{Conclusions and Future Work}

We studied the problem of fairly apportioning costs in transportation
scenarios, specifically TSGs. The Shapley value is a highly appealing
division concept for this task.  Since it cannot be evaluated in
reasonable time, we considered a number of proxies for the Shapley
value. We examined proxy performance both in terms of approximating the
Shapley value and the ranking of locations induced by the Shapley value.
The stand-out proxies with respect to both measures are $\fracchris$ and
$\fracblend$, a mixture of $\fracdist$ and $\fracmoat$.  These proxies
can be computed in reasonable time, and exhibit good properties in both
synthetic Euclidean games and real-world transportation scenarios.

Extensions of our work should develop proxies for the more general
setting of vehicle routing games, to quantify the importance of agent
synergies that are unique to the multi-vehicle model.
The transport companies we interact with further seek to understand the
impact of time windows (both the duration and position of allowable
service times), and the effect of delivery frequency on allocated costs.
Thus, a highly motivated and rich variety of problems is available to be
considered for future work.  
Additionally, future research should consider weighted Shapley values
for situations where some coalitions (and therefore margins) are more
likely to occur than others. Formal approximation ratios, to complement
the strong empirical evidence we obtained using sophisticated proxies
should also be the subject of future research. There also remains the
need for formal studies which employ proxy allocations to inform
solutions to hard optimisation problems in transportation domains. 
Finally, scaling to larger transportation scenarios shall require new
methods which avoid treating all individual agents in a large monolithic
evaluation of the allocation of costs. An approximation strategy that
may be fruitful here was proposed in~\cite{SCCP14a}, where agents are
partitioned into groups and each agent in the group is assigned the same
Shapley value.  Measures for clustering transport agents may appeal to
proximity to pose useful aggregations of agents.

%
%
%
\section*{Acknowledgments} NICTA is funded by the Australian Government
through the Department of Communications and the Australian Research
Council through the ICT Centre of Excellence Program.
Casey Cahan was supported by an Summer Research Scholarship at The Australian National University.
Toby Walsh also receives support from the Asian Office of Aerospace Research and Development 
(AOARD 124056) and the German Federal Ministry for Education and Research through the
Alexander von Humboldt Foundation.

We would like to thank Stefano Moretti and Patrice Perny from LIP6;
Hossein Azari Soufiani from Harvard University; David Rey and Vinayak
Dixit from the rCiti Project at the University of New South Wales School
of Civil and Environmental Engineering, and the reviewers and attendees
of the 5th Workshop on Cooperative Games in MultiAgent Systems
(CoopMAS-2014) for their helpful feedback and comments on early version
of this work.

\bibliographystyle{theapa} \bibliography{tsg}

\clearpage \appendix \section{Definitions and
Notations}\label{appendix:stats}

Our work makes use of statistical measures to compare the proxies, we
provide a brief overview here and refer the reader to the textbook by
Cordor and Foreman \cite{corder-foreman:b:nonparametric} for a more
complete treatment.  Note that $abs()$ is the absolute value of the
quantity $()$. Writing $\hat{x}$ to for the average of a set $\{x_1,
\ldots, x_n\}$, the standard deviation (St.Dev) of that set is: $$St.Dev
= \sqrt{ \frac{\sum_{i=1}^{n}abs(x_i - \hat{x})^2}{n}}.$$

The Shapley value of the $i$th location, divided by the sum of the
Shapley values for all locations is written $\fracshapley_i$, and to
denote a proxy (as in the main document) we write $\fracproxy_i$. The
\emph{absolute percent difference} between a value and its proxy is
$$\frac{abs(\fracshapley_i - \fracproxy_i)}{\fracshapley_i} \times
100.$$

Our study of proxy accuracy also makes use of \emph{root mean squared
errors} RMSE, a common metric to express the error made over a number of
predictions. Taking a TSG with locations $L$, the RMSE between a Shapley
allocation $\phi^{SV}$ and a proxy $\phi^{Proxy}$  is: $$ RMSE =\sqrt{
\frac{\sum_{i \in L} (\fracshapley_i - \fracproxy_i)^2}{|L|}}. $$

Shapley values can be used to rank/order locations, from least to most
costly. Our work studies the accuracy of proxies in that task using
Kendall's tau distance (KT distance), and also the KT rank correlation
coefficient, $\tau$. The KT distance measures the amount of disagreement
between two rankings. We study the ranking of locations induced by the
Shapley value and its proxies. In the case that the rankings correspond
to total orders\footnote{Which is always the case in our experiments.}
the KT distance is called the {\em bubble-sort} distance, and is equal
to the number of bubble-sort swap operations necessary to make two lists
agree. It is customary to report KT distance as a normalized value
(correlation coefficient) between 1 and -1, where 1 means that two lists
are perfectly correlated (equal) and -1 means that two lists are
perfectly anti-correlated (they are equal if one list is reversed).

In detail, let  $X$ and $Y$ be two partial orders over a set of items.
If $a \gtrless b \in X \cap Y$ then we say $X$ and $Y$ are
\emph{concordant} on $(a,b)$. If $a = b \in X \cup Y$ then we say there
is a \emph{tie}, and otherwise $(a,b)$ is \emph{discordant}. Where $M$
is the number of concordant pairs,  $N$ discordant pairs, $T$ ties
exclusively in $X$, $U$ ties exclusively in $Y$, the normalised KT
distance $\tau$ between $X$ and $Y$ is: $$\tau = \frac{M - N}{\sqrt{ (M
+ N + T) \times (M + N + U)}}$$

Our analysis makes use of the significance, or $p$-value of a KT
statistic. The $p$-value is computed using a two-tailed $t$-test where
the null hypothesis is that there is no correlation between orderings
($\tau = 0$). This means that if we take our significance threshold
$\alpha = 0.05$, as is common in the scientific literature, we can
reject the null hypothesis when $p \leq 0.05$.  The interpretation of
this statistic is that when $p \geq 0.05$ we fail to reject the null
hypothesis. A $p$-value $ \leq 0.05$ is a statistically significant
result, meaning it is unlikely that two random, uncorrelated lists would
show such a high degree of correlation.

%
\section{Synthetic Data}\label{appendix:synthetic}

For each $i\in[4,\ldots,35]$, we generate $20$ instances of the
Euclidean TSG with $i$ locations occurring uniformly at random in a
square  of dimension $1,000$. The horizontal and vertical coordinates of
the locations are represented using 32-bit floating point numbers.

Tables \ref{tab:synth_rmse1} to \ref{tab:synth_ptop} represent a
selected amount of raw data from our experiments.  The first two tables
show the RMSE and $\tau$ for various numbers of locations. The
subsequent two tables show the median and maximum (least significant)
$p$ values for the $\tau$ statistic.  Finally, Table
\ref{tab:synth_ptop} gives the percentage of correctly identified most
costly locations.  The tables in subsequent sections are the same for
the Real-World data.

\begin{table}[H] \centering \begin{tabular}{lcccccccc} \toprule
&\multicolumn{2}{c}{5 Locations} & \multicolumn{2}{c}{10 Locations} &
\multicolumn{2}{c}{15 Locations} & \multicolumn{2}{c}{20 Locations}\\ &
RMSE & St. Dev. & RMSE & St.Dev.& RMSE & St. Dev. & RMSE & St. Dev. \\
\midrule Shortcut Distance                     & 0.4948  & 0.1379 &
0.3826 & 0.0954 & 0.3603 & 0.0806 & 0.2965 & 0.0543  \\ Re-routed Margin
                     & 0.3951  & 0.1104 & 0.2630 & 0.0594 & 0.2061 &
0.0546 & 0.1826 & 0.0442  \\ Depot Distance                         &
0.1198  & 0.0579 & 0.0994 & 0.0325 & 0.1050 & 0.0263 & 0.0864 & 0.0182 
\\ \midrule Moat Packing                            & 0.1667  & 0.0487 &
0.0879 & 0.0278 & 0.0726 & 0.0252 & 0.0758 & 0.0174   \\ Christofides   
                      	& 0.0690 & 0.0292 & 0.0640 &  0.0268 & 0.0708
& 0.0229 & 0.0622 & 0.0136  \\ 60/40 Moat/Depot                     &
0.1136  & 0.0483 & 0.0538 & 0.0146 & 0.0575 & 0.0115 & 0.0529 & 0.0084 
\\ \bottomrule \end{tabular} \caption{Root Mean Squared Error (RMSE) and
Standard Deviation (St. Dev.) for the Synthetic data for games with
between 5 and 20 locations. Lower is better.} \label{tab:synth_rmse1}
\end{table}

\begin{table}[H] \centering \begin{tabular}{lcccccccc} \toprule
&\multicolumn{2}{c}{25 Locations} & \multicolumn{2}{c}{30 Locations} &
\multicolumn{2}{c}{35 Locations} & \multicolumn{2}{c}{All Games}\\ &
RMSE & St. Dev. & RMSE & St.Dev.& RMSE & St. Dev. & RMSE & St. Dev. \\
\midrule Shortcut Distance                     & 0.2830 & 0.0427 &
0.2553 & 0.0781 & 0.2437 & 0.0390 & 0.3309       & 0.0754     \\
Re-routed Margin                      & 0.1763 & 0.0371 & 0.1585 &
0.0455 & 0.1487 & 0.0238 & 0.2186       & 0.0536     \\ Depot Distance  
                      & 0.0843 & 0.0145 & 0.0827 & 0.0185 & 0.0771 &
0.0135 & 0.0935       & 0.0259     \\ \midrule Moat Packing             
              & 0.0679 & 0.0146 & 0.0627 & 0.0134 & 0.0576 & 0.0092 &
0.0845       & 0.0223     \\ Christofides                          	&
0.0610 & 0.0163 & 0.0584 & 0.0193 & 0.0568 & 0.0124 & 0.0632       &
0.0201     \\ 60/40 Moat/Depot                     & 0.0514 & 0.0095 &
0.0555 & 0.0125 & 0.0539 & 0.0086 & 0.0627       & 0.0162    \\
\bottomrule \end{tabular} \caption{Root Mean Squared Error (RMSE) and
Standard Deviation (St. Dev.) for the Synthetic data for games with
between 25 and 35 locations, as well as an average over all games. Lower
is better.} \label{tab:synth_rmse2} \end{table}

\begin{table}[H] \centering \begin{tabular}{lcccccccc} \toprule
&\multicolumn{2}{c}{5 Locations} & \multicolumn{2}{c}{10 Locations} &
\multicolumn{2}{c}{15 Locations} & \multicolumn{2}{c}{20 Locations}\\ &
$\tau$ & St. Dev. & $\tau$ & St. Dev.& $\tau$ & St. Dev. & $\tau$ & St.
Dev.  \\ \midrule Shortcut Distance                      & -0.0333 &
0.5153 & 0.0361 & 0.2554 & -0.0220 & 0.1332 & -0.0363 & 0.1358 \\
Re-routed Margin                      & 0.6833  & 0.4010 & 0.5333 &
0.1453 & 0.4297 &  0.1598 & 0.3813 &  0.1505   \\ Depot Distance        
                & 0.6500  & 0.3069 & 0.5833 &  0.1422 & 0.5440 &  0.1311
& 0.5053 &  0.1464  \\ \midrule Moat Packing                         
	& 0.6167  & 0.4628 & 0.6139 & 0.1952 & 0.6055 &  0.1126 & 0.5304 & 
0.1180 \\ Christofides                          	& 0.8000  & 0.2667 &
0.7500 & 0.1770 & 0.6407 &  0.1819 & 0.5965  & 0.0999 \\ 60/40
Moat/Depot 		       & 0.7000   &0.3636 & 0.8056 & 0.0986 & 0.7374 
& 0.0794 & 0.6690  & 0.1105   \\ \bottomrule \end{tabular}
\caption{Average KT distance ($\tau$) and Standard Deviation (St. Dev.)
for the Synthetic data for games with between 5 and 20 locations. Higher
is better; $+1$ means the two lists are perfectly correlated and $-1$
means the two lists are perfectly anti-correlated.} \end{table}

\begin{table}[H] \centering \begin{tabular}{lcccccccc} \toprule
&\multicolumn{2}{c}{25 Locations} & \multicolumn{2}{c}{30 Locations} &
\multicolumn{2}{c}{35 Locations} & \multicolumn{2}{c}{All Games}\\ &
$\tau$ & St. Dev. & $\tau$ & St. Dev.& $\tau$ & St. Dev. & $\tau$ & St.
Dev.  \\ \midrule Shortcut Distance                      & 0.0174 &
0.1093 & 0.0212 & 0.1230 & -0.0132 & 0.0859 & -0.0043    & 0.1940     \\
Re-routed Margin                      & 0.3449 & 0.1526 & 0.3406 &
0.0940 & 0.3487  & 0.0884 & 0.4374     & 0.1702     \\ Depot Distance   
                     & 0.5297 & 0.1190 & 0.4914 & 0.0924 & 0.4267 & 
0.1206 & 0.5329     & 0.1512     \\ \midrule Moat Packing               
          	& 0.5315 & 0.1020 & 0.5030 & 0.0857 & 0.4938  & 0.0905 &
0.5564     & 0.1667     \\ Christofides                          	&
0.6033 &  0.0800 & 0.6288 & 0.0736 & 0.5601  & 0.0900 & 0.6542     &
0.1384     \\ 60/40 Moat/Depot 		       & 0.6848  & 0.0802 &
0.6266 & 0.0765 & 0.5797  & 0.0812 & 0.6862     & 0.1271    \\
\bottomrule \end{tabular} \caption{Average KT distance ($\tau$) and
Standard Deviation (St. Dev.) for the Synthetic data for games with
between 25 and 35 locations, as well as an average over all games.
Higher is better; $+1$ means the two lists are perfectly correlated and
$-1$ means the two lists are perfectly anti-correlated.} \end{table}

\begin{table}[H] \centering \begin{tabular}{lccccccc} \toprule
&\multicolumn{7}{c}{Number of Locations}\\ & 5      & 10     & 15     &
20     & 25     & 30     & 35     			   \\ \midrule Shortcut
Distance              & 0.4969 & 0.5316 & 0.5470 & 0.6763 & 0.5516 &
0.5495 & 0.6253  \\ Re-routed Margin               & 0.1079 & 0.0371 &
0.0328 & 0.0359 & 0.0298 & 0.0079 & 0.0023  \\ Depot Distance           
      & 0.1742 & 0.0218 & 0.0087 & 0.0019 & 0.0002 & 0.0001 & 0.0012  \\
\midrule Moat Packing                     & 0.1742 & 0.0123 & 0.0022 &
0.0021 & 0.0004 & 0.0002 & 0.0000  \\ Christofides                      
& 0.0415 & 0.0035 & 0.0009 & 0.0002 & 0.0001 & 0.0000 & 0.0000  \\ 60/40
Moat/Depot 		& 0.1079 & 0.0018 & 0.0001 & 0.0000 & 0.0000 & 0.0000
& 0.0000 \\ \bottomrule \end{tabular} \caption{Median $p$ values out of
20 games per number of locations of $\tau$ for the Synthetic data. 
Lower is better, $p < 0.05$ required for statistical significance.}
\end{table}

\begin{table}[H] \centering \begin{tabular}{lccccccc} \toprule
&\multicolumn{7}{c}{Number of Locations}\\ & 5      & 10     & 15     &
20     & 25     & 30     & 35     \\ \midrule

Shortcut Distance                     &1.0000 & 1.0000 & 0.8695 & 0.9164
& 0.9604 & 0.9402 & 0.9882 \\ Re-routed Margin                    
&1.0000 & 0.5316 & 0.6222 & 0.5520 & 0.5516 & 0.3294 & 0.1680 \\ Depot
Distance                        & 1.0000 & 0.2109 & 0.1124 & 0.1955 &
0.0594 & 0.0468 & 0.1196 \\ \midrule Moat Packing                       
    &1.0000 & 1.0000 & 0.1394 & 0.0637 & 0.0197 & 0.0244 & 0.0067 \\
Christofides                          	&1.0000 & 0.4042 & 0.1394 &
0.0191 & 0.0007 & 0.0002 & 0.0029 \\ 60/40 Moat/Depot 		      &
0.4969 & 0.0218 & 0.0037 & 0.0107 & 0.0007 & 0.0006 & 0.0008 \\
\bottomrule \end{tabular} \caption{Maximum $p$ values out of 20 games
per number of locations of $\tau$ for the Synthetic data.  Lower is
better, $p < 0.05$ required for statistical significance.} \end{table}

\begin{table}[H] \centering \begin{tabular}{lccccccc||c} \toprule
&\multicolumn{7}{c||}{Number of Locations} & \\ & 5      & 10     & 15  
  & 20     & 25     & 30     & 35     &All Games \\ \midrule Shortcut
Distance                     & 35.0\% & 20.0\% & 0.0\%  & 5.0\%  & 0.0\%
 & 5.0\%  & 5.0\%  & 10.0\%          \\ Re-routed Margin                
     & 85.0\% & 90.0\% & 65.0\% & 65.0\% & 70.0\% & 50.0\% & 65.0\% &
70.0\%          \\ Depot Distance                        & 75.0\% &
25.0\% & 30.0\% & 45.0\% & 20.0\% & 35.0\% & 40.0\% & 38.6\%          \\
\midrule Moat Packing                          & 65.0\% & 80.0\% &
75.0\% & 50.0\% & 55.0\% & 50.0\% & 60.0\% & 62.1\%          \\
Christofides                          & 85.0\% & 75.0\% & 60.0\% &
45.0\% & 80.0\% & 65.0\% & 55.0\% & 66.4\%          \\ 60/40 Moat/Depot
& 70.0\% & 75.0\% & 75.0\% & 50.0\% & 65.0\% & 60.0\% & 55.0\% & 64.3\%
\\ \bottomrule \end{tabular} \caption{Percentage of correct top elements
of the Shapley ordering identified by the respective proxy for the 20
instances per game size for the Synthetic data.} \label{tab:synth_ptop}
\end{table}

\begin{figure} \begin{minipage}[b]{0.49\linewidth} \centering
\includegraphics[height=6.7cm, page=66]{./Synth_Scatter_Error}
\includegraphics[height=6.7cm,
page=165]{./Synth_Scatter_Error}
\includegraphics[height=6.7cm, page=33]{./Synth_Scatter_Error}
\end{minipage} \begin{minipage}[b]{0.49\linewidth} \centering
\includegraphics[height=6.7cm,
page=132]{./Synth_Scatter_Error}
\includegraphics[height=6.7cm, page=99]{./Synth_Scatter_Error}
\includegraphics[height=6.7cm,
page=198]{./Synth_Scatter_Error} \end{minipage}
\caption{Absolute value of the difference between the $\fracshapley$ and
$\fracproxy$ plotted as a function of $\fracshapley$ for all the points
in the Synthetic data for all game sizes.  Note that these are log-log
plots to highlight the spread of the data. This is a repeat of
Figure~\ref{fig:synthetic_scatter_error}} \end{figure}

%
\clearpage \section{Auckland Data}\label{appendix:auckland}

For Auckland we obtained 13 instances of 10 location games and 8
instances of 20 location games.

\begin{table}[H] \centering \begin{tabular}{lcccccc} \toprule
&\multicolumn{2}{c}{10 Locations} & \multicolumn{2}{c}{20 Locations} &
\multicolumn{2}{c}{All Games} \\ & RMSE & St. Dev. & RMSE & St.Dev.&
RMSE & St. Dev.  \\ \midrule Shortcut Distance & 0.4208 & 0.1140 &
0.3498 & 0.1599 & 0.3853 & 0.1370 \\ Re-routed Margin  & 0.4111 & 0.1052
& 0.3222 & 0.1427 & 0.3667 & 0.1240 \\ Depot Distance    & 0.1680 &
0.0923 & 0.0937 & 0.0419 & 0.1309 & 0.0671 \\ \midrule Moat Packing     
& 0.2079 & 0.0947 & 0.2165 & 0.1364 & 0.2122 & 0.1156 \\ Christofides   
  & 0.1409 & 0.0731 & 0.0887 & 0.0514 & 0.1148 & 0.0623 \\ 60/40
Moat/Depot  & 0.1541 & 0.0831 & 0.0879 & 0.0414 & 0.1210 & 0.0623 \\
\bottomrule \end{tabular} \caption{Root Mean Squared Error (RMSE) and
Standard Deviation (St. Dev.) for the Auckland data for games with 10
and 20 locations. Lower is better.} \end{table}

\begin{table}[H] \centering \begin{tabular}{lcccccc} \toprule
&\multicolumn{2}{c}{10 Locations} & \multicolumn{2}{c}{20 Locations} &
\multicolumn{2}{c}{All Games} \\ & $\tau$ & St. Dev. & $\tau$ & St.
Dev.& $\tau$ & St. Dev.  \\ \midrule Shortcut Distance & 0.0470 & 0.3513
& 0.1083 & 0.1528 & 0.0777 & 0.2521 \\ Re-routed Margin  & 0.1815 &
0.2986 & 0.2538 & 0.2080 & 0.2177 & 0.2533 \\ Depot Distance    & 0.0085
& 0.3026 & 0.1520 & 0.3265 & 0.0803 & 0.3146 \\ \midrule Moat Packing   
  & 0.2122 & 0.2652 & 0.2210 & 0.1528 & 0.2166 & 0.2090 \\ Christofides 
    & 0.1068 & 0.3442 & 0.2456 & 0.3474 & 0.1762 & 0.3458 \\ 60/40
Moat/Depot  & 0.0513 & 0.2853 & 0.1886 & 0.3369 & 0.1200 & 0.3111\\
\bottomrule \end{tabular} \caption{Average KT distance ($\tau$) and
Standard Deviation (St. Dev.) for the Auckland data for games with 10
and 20 locations. Higher is better; $+1$ means the two lists are
perfectly correlated and $-1$ means the two lists are perfectly
anti-correlated.} \end{table}

\begin{table}[H] \centering \begin{tabular}{lcccc} \toprule &
\multicolumn{2}{c}{Median} & \multicolumn{2}{c}{Maximum} \\ & 10
Locations           & 20 Locations          & 10 Locations         & 20
Locations         \\ \midrule Shortcut Distance & 0.2109 & 0.2704 &
0.8348 & 0.8065 \\ Re-routed Margin  & 0.4042 & 0.2053 & 1.0000 & 0.6492
\\ Depot Distance    & 0.4042 & 0.2081 & 1.0000 & 0.9164 \\ \midrule
Moat Packing      & 0.4042 & 0.1700 & 0.8348 & 0.5754 \\ Christofides   
  & 0.4042 & 0.3460 & 1.0000 & 0.9164 \\ 60/40 Moat/Depot  & 0.6767 &
0.1904 & 1.0000 & 0.7529 \\ \bottomrule \end{tabular} \caption{Median
and Maximum $p$ values out of 13 instances of 10 location games and 8
instances of 20 location games of $\tau$ for the Auckland data. Lower is
better, $p < 0.05$ required for statistical significance.} \end{table}

\begin{table}[H] \centering \begin{tabular}{lccc} \toprule & 10
Locations     & 20 Locations     & All Games \\ \midrule Shortcut
Distance & 7.7\%  & 0.0\%  & 4.8\%  \\ Re-routed Margin  & 15.4\% &
37.5\% & 23.8\% \\ Depot Distance    & 7.7\%  & 12.5\% & 9.5\%  \\
\midrule Moat Packing      & 7.7\%  & 25.0\% & 14.3\% \\ Christofides   
  & 7.7\%  & 75.0\% & 33.3\% \\ 60/40 Moat/Depot  & 7.7\%  & 25.0\% &
14.3\% \\ \bottomrule \end{tabular} \caption{Percentage of correct top
elements of the Shapley ordering identified by the respective proxy for
the 13 games of size 10 and 8 games of size 20 for the Auckland data.}
\end{table}

\begin{figure}[h] \centering \begin{minipage}[b]{0.49\linewidth}
\centering \includegraphics[width=\columnwidth,
page=1]{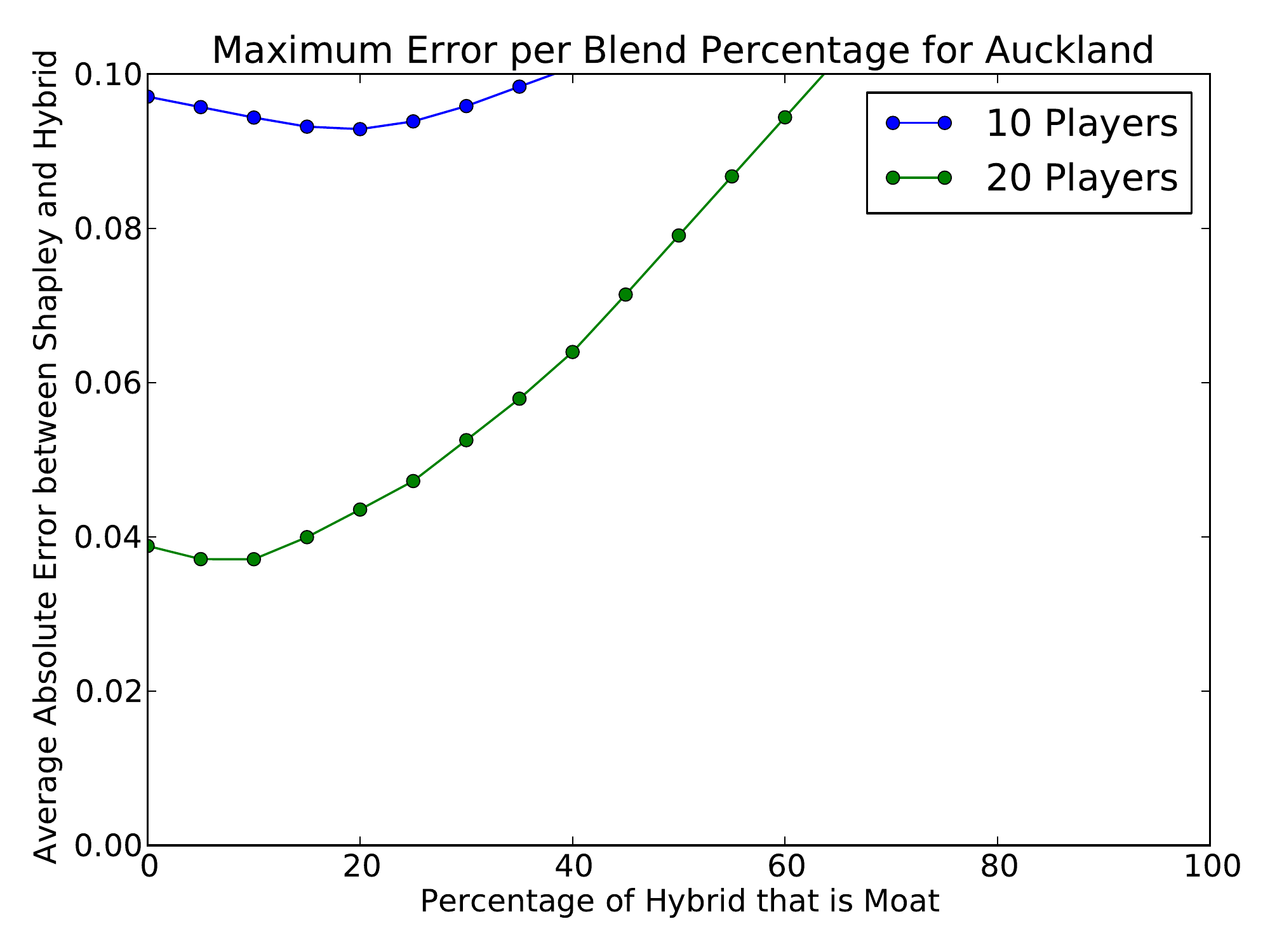} \end{minipage}
\begin{minipage}[b]{0.49\linewidth} \centering
\includegraphics[width=\columnwidth, page=3]{./Emp_Auckland}
\end{minipage} \caption{Effect of the blending parameter $\lambda$ on
the error of Shapley allocation prediction for the Auckland dataset. The
left-hand graph shows the average worst case error that any single
location experiences, while the right-hand graph shows the average error
over all locations.} \end{figure}

\begin{figure} \begin{minipage}[b]{0.49\linewidth} \centering
\includegraphics[height=6.7cm,
page=6]{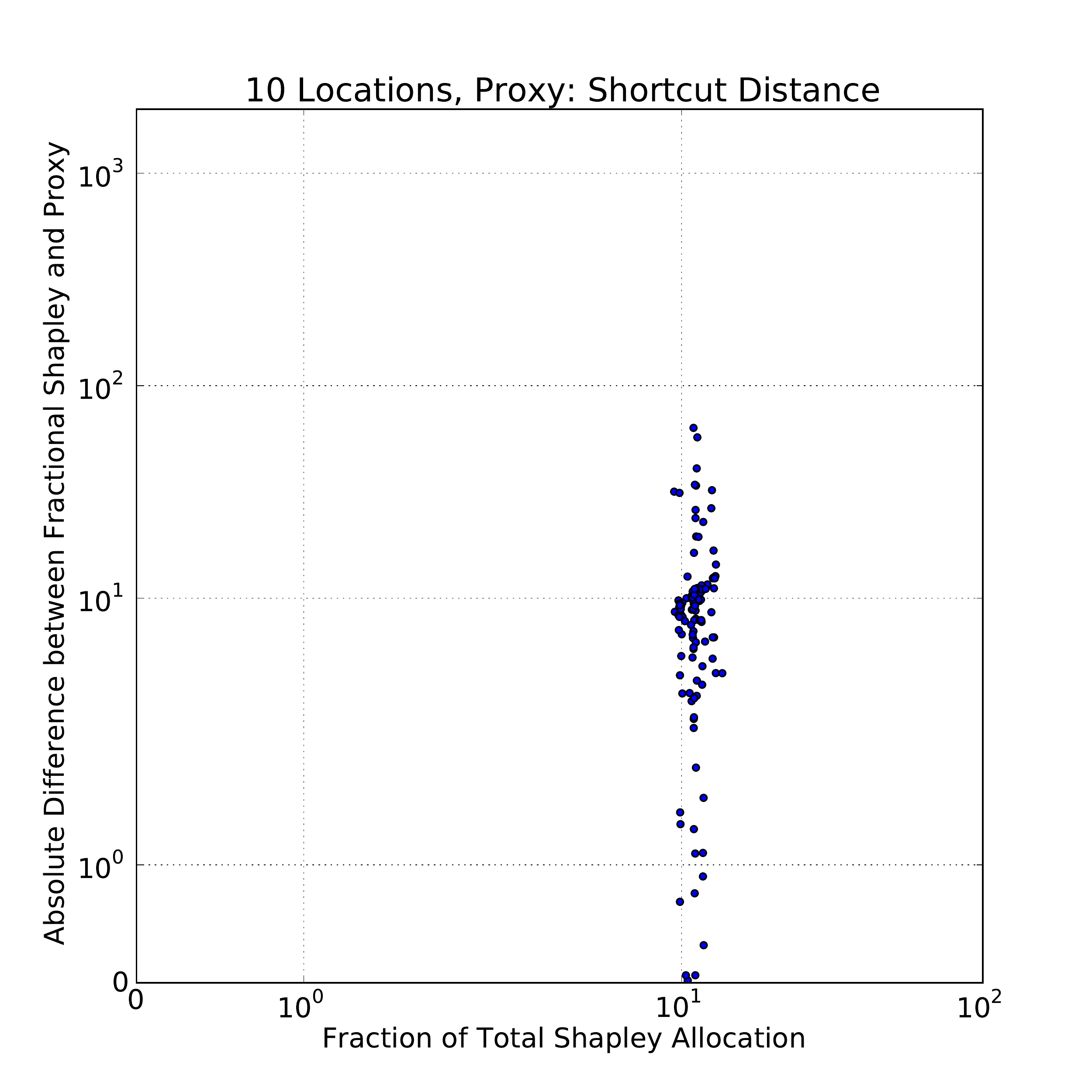}
\includegraphics[height=6.7cm,
page=9]{./Auckland_Scatter_Error}
\includegraphics[height=6.7cm,
page=15]{./Auckland_Scatter_Error} \end{minipage}
\begin{minipage}[b]{0.49\linewidth} \centering
\includegraphics[height=6.7cm,
page=3]{./Auckland_Scatter_Error}
\includegraphics[height=6.7cm,
page=12]{./Auckland_Scatter_Error}
\includegraphics[height=6.7cm,
page=18]{./Auckland_Scatter_Error} \end{minipage}
\caption{Absolute value of the difference between the $\fracshapley$ and
$\fracproxy$ plotted as a function of $\fracshapley$ for all the points
in the Auckland data for all game sizes.  Note that these are log-log
plots to highlight the spread of the data.} \end{figure}

\newpage \section{Canberra Data}\label{appendix:canberra}

For the Canberra data we obtained 7 instances each of problems with 10
and 20 locations, respectively.

\begin{table}[H] \centering \begin{tabular}{lcccccc} \toprule
&\multicolumn{2}{c}{10 Locations} & \multicolumn{2}{c}{20 Locations} &
\multicolumn{2}{c}{All Games} \\ & RMSE & St. Dev. & RMSE & St.Dev.&
RMSE & St. Dev.  \\ \midrule Shortcut Distance & 0.3651 & 0.0763 &
0.2827 & 0.0088 & 0.3239 & 0.0426 \\ Re-routed Margin  & 0.3353 & 0.0930
& 0.2528 & 0.0149 & 0.2941 & 0.0540 \\ Depot Distance    & 0.1405 &
0.0362 & 0.0870 & 0.0262 & 0.1138 & 0.0312 \\ \midrule Moat Packing     
& 0.1717 & 0.0696 & 0.1597 & 0.0279 & 0.1657 & 0.0488 \\ Christofides   
  & 0.1206 & 0.0464 & 0.0830 & 0.0219 & 0.1018 & 0.0342 \\ 60/40
Moat/Depot  & 0.1291 & 0.0395 & 0.0777 & 0.0221 & 0.1034 & 0.0308 \\
\bottomrule \end{tabular} \caption{Root Mean Squared Error (RMSE) and
Standard Deviation (St. Dev.) for the Canberra data for games with 10
and 20 locations. Lower is better.} \end{table}

\begin{table}[H] \centering \begin{tabular}{lcccccc} \toprule
&\multicolumn{2}{c}{10 Locations} & \multicolumn{2}{c}{20 Locations} &
\multicolumn{2}{c}{All Games} \\ & $\tau$ & St. Dev. & $\tau$ & St.
Dev.& $\tau$ & St. Dev.  \\ \midrule

Shortcut Distance & -0.0714 & 0.1918 & 0.0126 & 0.1758 & -0.0294 &
0.1838 \\ Re-routed Margin  & 0.3095  & 0.2552 & 0.2239 & 0.2124 &
0.2667  & 0.2338 \\ Depot Distance    & 0.0556  & 0.0985 & 0.0693 &
0.1161 & 0.0625  & 0.1073 \\ \midrule Moat Packing      & 0.2505  &
0.2860 & 0.1512 & 0.1458 & 0.2009  & 0.2159 \\ Christofides      &
0.1190  & 0.2109 & 0.2331 & 0.1916 & 0.1761  & 0.2013 \\ 60/40
Moat/Depot  & 0.1190  & 0.1888 & 0.1178 & 0.1632 & 0.1184  & 0.1760 \\
\bottomrule \end{tabular} \caption{Average KT distance ($\tau$) and
Standard Deviation (St. Dev.) for the Canberra data for games with 10
and 20 locations. Higher is better; $+1$ means the two lists are
perfectly correlated and $-1$ means the two lists are perfectly
anti-correlated.} \end{table}

\begin{table}[H] \centering \begin{tabular}{lcccc} \toprule &
\multicolumn{2}{c}{Median} & \multicolumn{2}{c}{Maximum} \\ & 10
Locations           & 20 Locations          & 10 Locations         & 20
Locations         \\ \midrule Shortcut Distance & 0.6767 & 0.3103 &
1.0000 & 0.8886 \\ Re-routed Margin  & 0.2971 & 0.0744 & 0.4042 & 0.9721
\\ Depot Distance    & 0.8348 & 0.3449 & 1.0000 & 0.9164 \\ \midrule
Moat Packing      & 0.2109 & 0.3818 & 0.8348 & 0.9164 \\ Christofides   
  & 0.6767 & 0.1515 & 0.8348 & 0.7529 \\ 60/40 Moat/Depot  & 0.6767 &
0.2208 & 0.8348 & 0.9721 \\ \bottomrule \end{tabular} \caption{Median
and Maximum $p$ values out of 7 games per number of locations of $\tau$
for the Canberra data. Lower is better, $p < 0.05$ required for
statistical significance.} \end{table}

\begin{table}[H] \centering \begin{tabular}{lccc} \toprule & 10
Locations     & 20 Locations     & All Games \\ \midrule Shortcut
Distance & 0.0\%  & 0.0\%  & 0.0\%  \\ Re-routed Margin  & 0.0\%  &
28.6\% & 14.3\% \\ Depot Distance    & 0.0\%  & 28.6\% & 14.3\% \\
\midrule Moat Packing      & 0.0\%  & 42.9\% & 21.4\% \\ Christofides   
  & 0.0\%  & 42.9\% & 21.4\% \\ 60/40 Moat/Depot  & 14.3\% & 42.9\% &
28.6\% \\ \bottomrule \end{tabular} \caption{Percentage of correct top
elements of the Shapley ordering identified by the respective proxy for
the 7 trials per game size for the Canberra data.} \end{table}

\begin{figure}[h] \centering \begin{minipage}[b]{0.49\linewidth}
\centering \includegraphics[width=\columnwidth,
page=1]{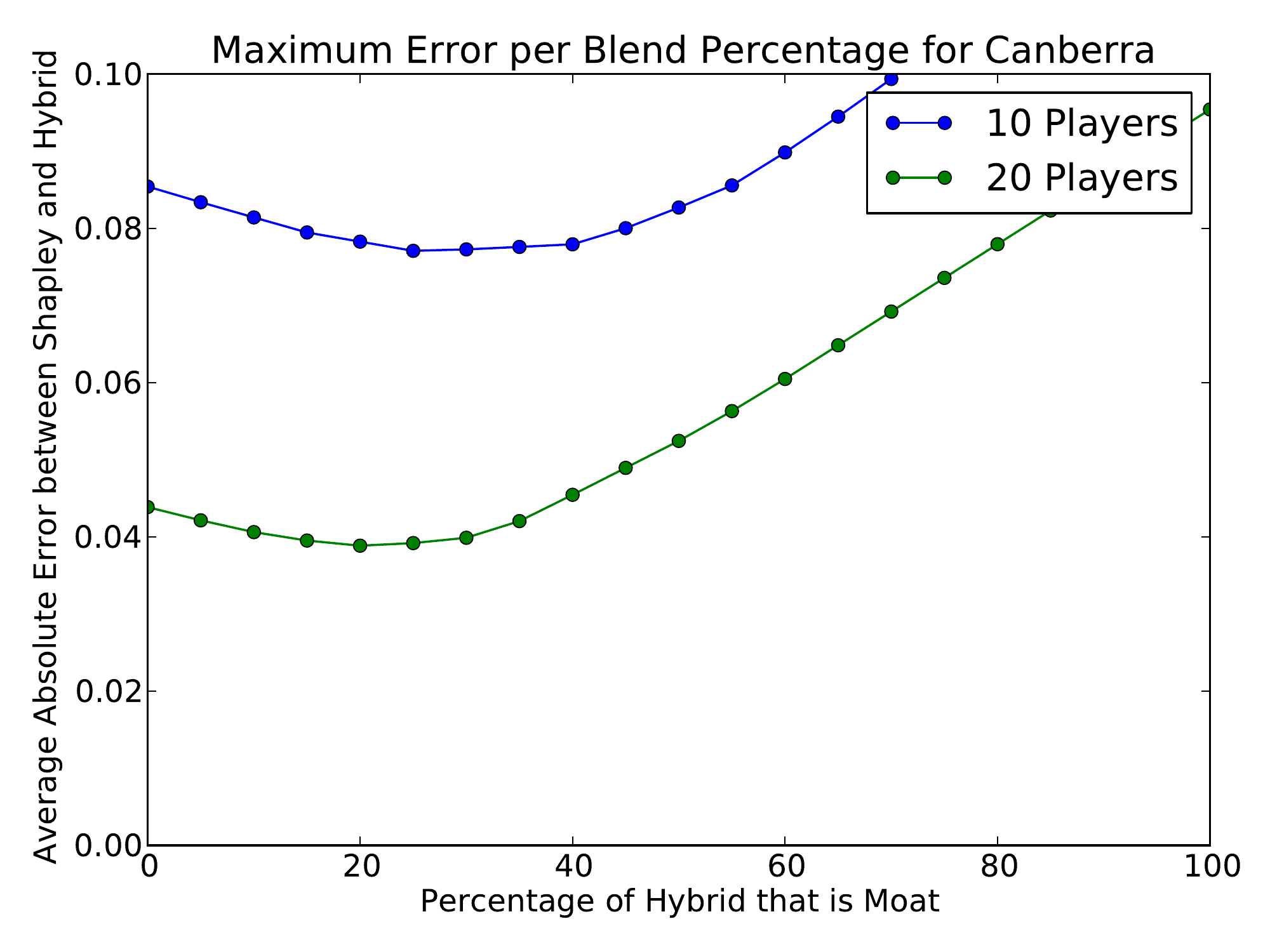} \end{minipage}
\begin{minipage}[b]{0.49\linewidth} \centering
\includegraphics[width=\columnwidth, page=3]{./Emp_Canberra}
\end{minipage} \caption{Effect of the blending parameter $\lambda$ on
the error of Shapley allocation prediction for the Canberra dataset. The
left-hand graph shows the average worst case error that any single
location experiences, while the right-hand graph shows the average error
over all locations.} \end{figure}

\begin{figure} \begin{minipage}[b]{0.49\linewidth} \centering
\includegraphics[height=6.7cm,
page=6]{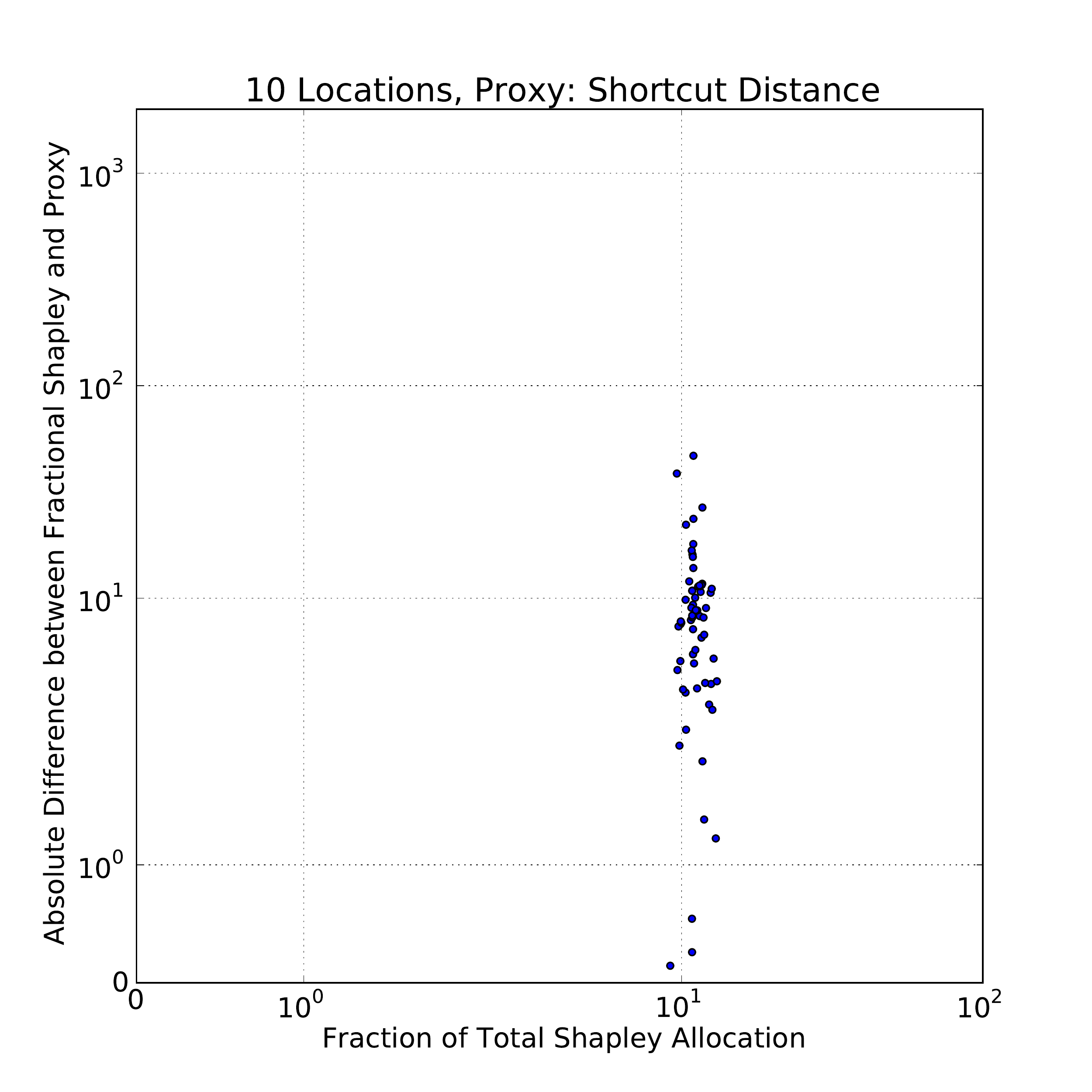}
\includegraphics[height=6.7cm,
page=9]{./Canberra_Scatter_Error}
\includegraphics[height=6.7cm,
page=15]{./Canberra_Scatter_Error} \end{minipage}
\begin{minipage}[b]{0.49\linewidth} \centering
\includegraphics[height=6.7cm,
page=3]{./Canberra_Scatter_Error}
\includegraphics[height=6.7cm,
page=12]{./Canberra_Scatter_Error}
\includegraphics[height=6.7cm,
page=18]{./Canberra_Scatter_Error} \end{minipage}
\caption{Absolute value of the difference between the $\fracshapley$ and
$\fracproxy$ plotted as a function of $\fracshapley$ for all the points
in the Canberra data for all game sizes.  Note that these are log-log
plots to highlight the spread of the data.} \end{figure}

\clearpage \section{Sydney Data}\label{appendix:sydney}

For Sydney we obtained 49 instances of 10 location games and 29
instances of 20 location games.

\begin{table}[H] \centering \begin{tabular}{lcccccc} \toprule
&\multicolumn{2}{c}{10 Locations} & \multicolumn{2}{c}{20 Locations} &
\multicolumn{2}{c}{All Games} \\ & RMSE & St. Dev. & RMSE & St.Dev.&
RMSE & St. Dev.  \\ \midrule Shortcut Distance & 0.4598 & 0.1532 &
0.3266 & 0.0976 & 0.3932 & 0.1254 \\ Re-routed Margin  & 0.4288 & 0.1398
& 0.2903 & 0.0832 & 0.3596 & 0.1115 \\ Depot Distance    & 0.1248 &
0.0500 & 0.0852 & 0.0269 & 0.1050 & 0.0385 \\ \midrule Moat Packing     
& 0.2692 & 0.1308 & 0.2005 & 0.0779 & 0.2349 & 0.1044 \\ Christofides   
  & 0.1338 & 0.0708 & 0.0864 & 0.0248 & 0.1101 & 0.0478 \\ 60/40
Moat/Depot  & 0.1438 & 0.0687 & 0.0731 & 0.0271 & 0.1085 & 0.0479 \\
\bottomrule \end{tabular} \caption{Root Mean Squared Error (RMSE) and
Standard Deviation (St. Dev.) for the Sydney data for games with 10 and
20 locations. Lower is better.} \end{table}

\begin{table}[H] \centering \begin{tabular}{lcccccc} \toprule
&\multicolumn{2}{c}{10 Locations} & \multicolumn{2}{c}{20 Locations} &
\multicolumn{2}{c}{All Games} \\ & $\tau$ & St. Dev. & $\tau$ & St.
Dev.& $\tau$ & St. Dev.  \\ \midrule Shortcut Distance & -0.0213 &
0.2501 & 0.0882 & 0.1974 & 0.0335 & 0.2238 \\ Re-routed Margin  & 0.4114
 & 0.1895 & 0.4033 & 0.1681 & 0.4074 & 0.1788 \\ Depot Distance    &
0.1394  & 0.2242 & 0.1874 & 0.2140 & 0.1634 & 0.2191 \\ \midrule Moat
Packing      & 0.3937  & 0.2306 & 0.3675 & 0.1457 & 0.3806 & 0.1882 \\
Christofides      & 0.3016  & 0.2444 & 0.4083 & 0.1597 & 0.3550 & 0.2021
\\ 60/40 Moat/Depot  & 0.2563  & 0.2303 & 0.3035 & 0.2191 & 0.2799 &
0.2247 \\ \bottomrule \end{tabular} \caption{Average KT distance
($\tau$) and Standard Deviation (St. Dev.) for the Sydney data for games
with 10 and 20 locations. Higher is better; $+1$ means the two lists are
perfectly correlated and $-1$ means the two lists are perfectly
anti-correlated.} \end{table}

\begin{table}[H] \centering \begin{tabular}{lcccc} \toprule &
\multicolumn{2}{c}{Median} & \multicolumn{2}{c}{Maximum} \\ & 10
Locations           & 20 Locations          & 10 Locations         & 20
Locations         \\ \midrule Shortcut Distance & 0.4631 & 0.4393 &
1.0000 & 0.9720 \\ Re-routed Margin  & 0.1400 & 0.0265 & 0.7505 & 0.8546
\\ Depot Distance    & 0.4042 & 0.3449 & 1.0000 & 0.9721 \\ \midrule
Moat Packing      & 0.0953 & 0.0328 & 1.0000 & 0.8063 \\ Christofides   
  & 0.2109 & 0.0130 & 1.0000 & 0.5997 \\ 60/40 Moat/Depot  & 0.2971 &
0.0744 & 1.0000 & 0.9164 \\ \bottomrule \end{tabular} \caption{Median
and Maximum $p$ values out of 49 instances of 10 location games and 29
instances of 20 location games of $\tau$ for the Sydney data. Lower is
better, $p < 0.05$ required for statistical significance.} \end{table}

\begin{table}[H] \centering \begin{tabular}{lccc} \toprule & 10
Locations     & 20 Locations     & All Games \\ \midrule

Shortcut Distance & 6.1\%  & 24.1\% & 12.8\% \\ Re-routed Margin  &
55.1\% & 82.8\% & 65.4\% \\ Depot Distance    & 46.9\% & 48.3\% & 47.4\%
\\ \midrule Moat Packing      & 57.1\% & 75.9\% & 64.1\% \\ Christofides
     & 53.1\% & 55.2\% & 53.8\% \\ 60/40 Moat/Depot  & 55.1\% & 65.5\% &
59.0\% \\ \bottomrule \end{tabular} \caption{Percentage of correct top
elements of the Shapley ordering identified by the respective proxy for
the 49 instances of 10 location games and 29 instances of 20 location
games.} \end{table}

\begin{figure}[h] \centering \begin{minipage}[b]{0.49\linewidth}
\centering \includegraphics[width=\columnwidth,
page=1]{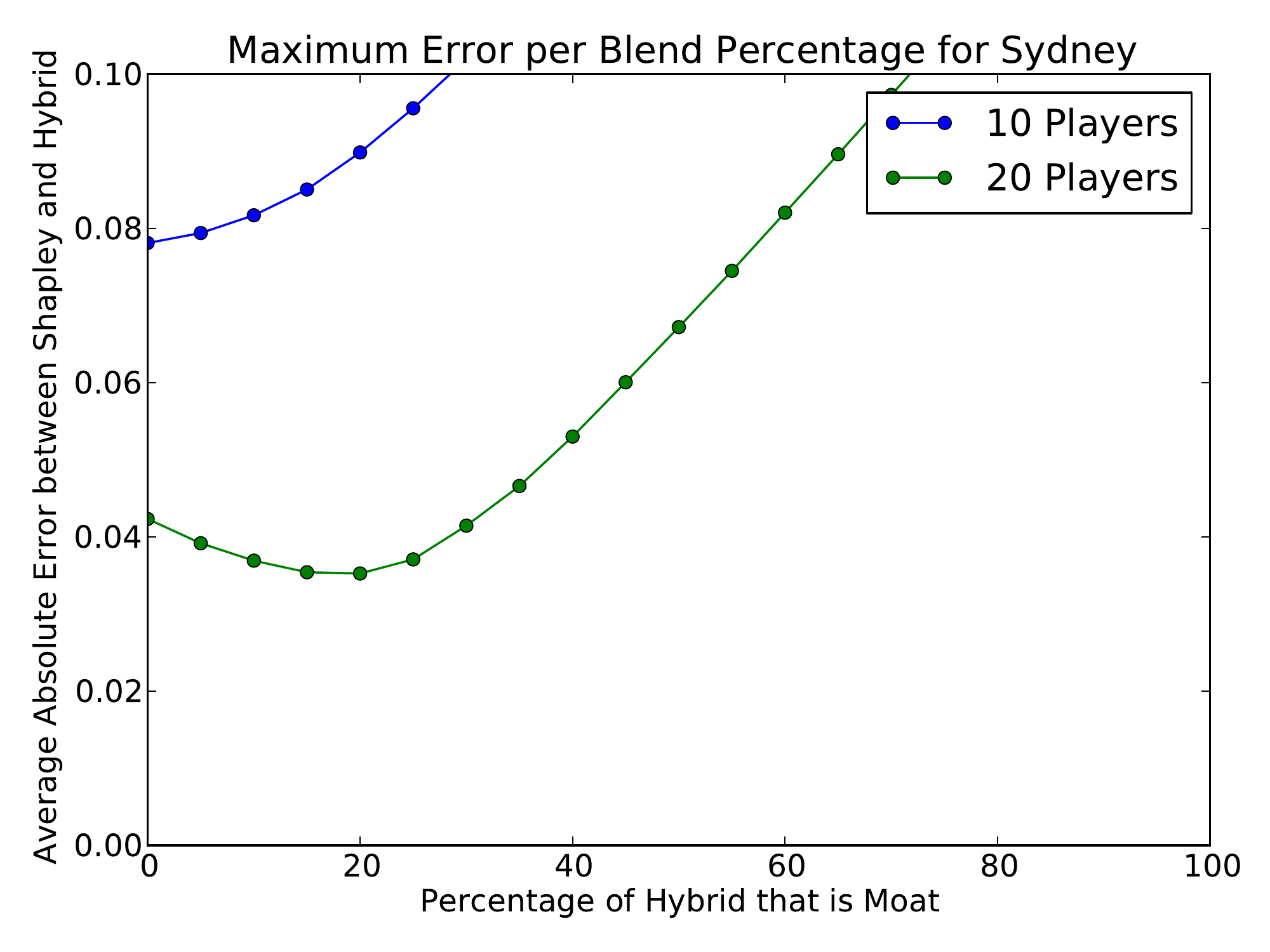} \end{minipage}
\begin{minipage}[b]{0.49\linewidth} \centering
\includegraphics[width=\columnwidth, page=3]{./Emp_Sydney}
\end{minipage} \caption{Effect of the blending parameter $\lambda$ on
the error of Shapley allocation prediction for the Sydney dataset. The
left-hand graph shows the average worst case error that any single
location experiences, while the right-hand graph shows the average error
over all locations.} \end{figure}

\begin{figure} \begin{minipage}[b]{0.49\linewidth} \centering
\includegraphics[height=6.7cm, page=6]{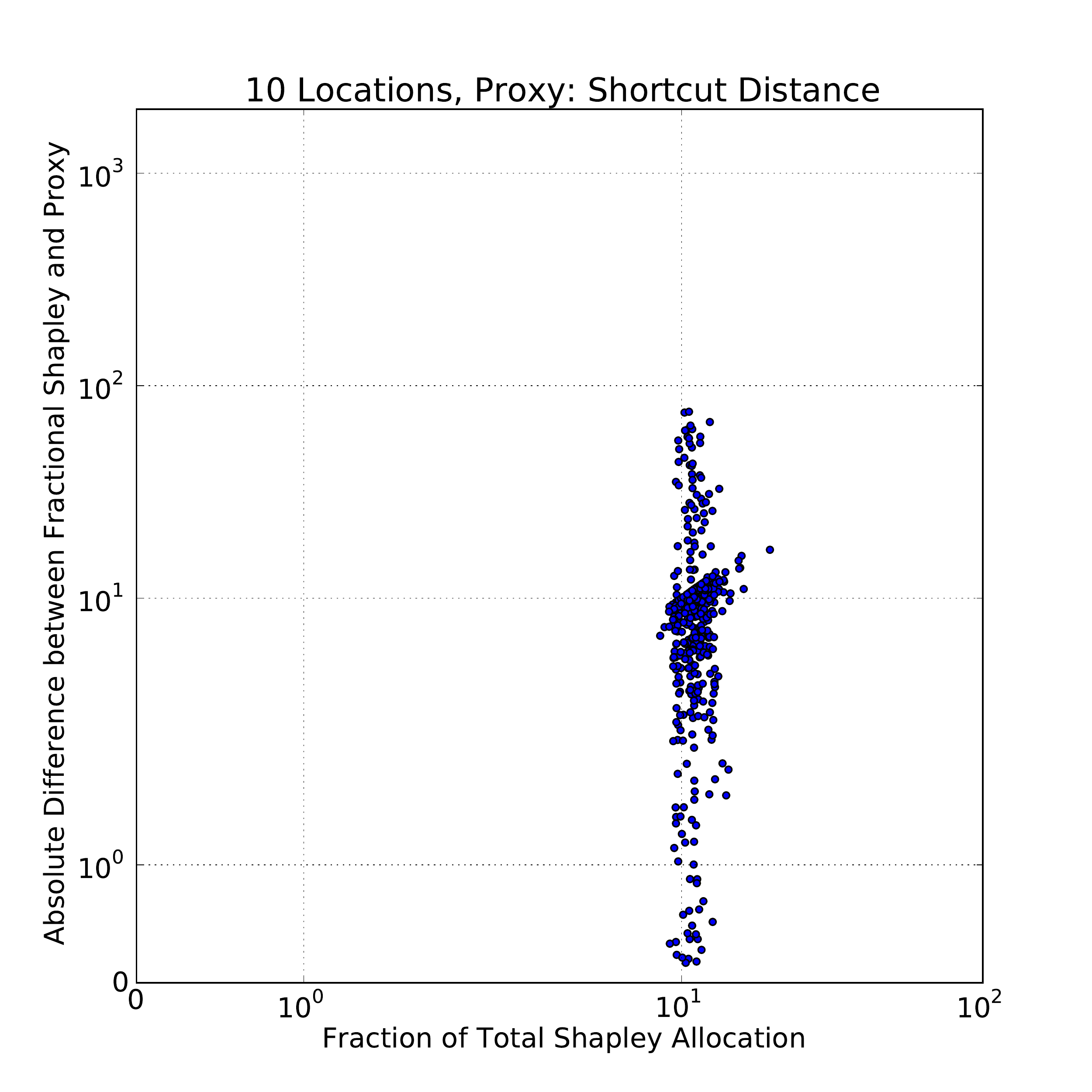}
\includegraphics[height=6.7cm, page=9]{./Sydney_Scatter_Error}
\includegraphics[height=6.7cm,
page=15]{./Sydney_Scatter_Error} \end{minipage}
\begin{minipage}[b]{0.49\linewidth} \centering
\includegraphics[height=6.7cm, page=3]{./Sydney_Scatter_Error}
\includegraphics[height=6.7cm,
page=12]{./Sydney_Scatter_Error}
\includegraphics[height=6.7cm,
page=18]{./Sydney_Scatter_Error} \end{minipage}
\caption{Absolute value of the difference between the $\fracshapley$ and
$\fracproxy$ plotted as a function of $\fracshapley$ for all the points
in the Sydney data for all game sizes.  Note that these are log-log
plots to highlight the spread of the data.} \end{figure} \end{document}